\pdfoutput=1
\documentclass[12pt,reqno]{amsart}

\numberwithin{equation}{section}
\usepackage[tt=false]{libertine}
\usepackage{mathtools}

\usepackage{amssymb,mathrsfs}
\usepackage[varbb]{newpxmath}

\let\savedbigtimes\bigtimes
\let\bigtimes\relax
\usepackage{mathabx} 
\let\bigtimes\savedbigtimes

\usepackage[margin=1in]{geometry}
\usepackage{graphicx}
\usepackage{enumerate}
\usepackage{bbm}
\usepackage{hyperref,color}
\usepackage[capitalize,nameinlink]{cleveref}
\usepackage[dvipsnames]{xcolor}
\hypersetup{
	colorlinks=true,
	pdfpagemode=UseNone,
    citecolor=OliveGreen,
    linkcolor=NavyBlue,
    urlcolor=black,
	pdfstartview=FitW
}
\usepackage{appendix}
\crefname{appsec}{Appendix}{Appendices}
\usepackage{tikz}
\usepackage{bm}
\usepackage{mathtools}

\newtheorem{theorem}{Theorem}[section]
\newtheorem{proposition}[theorem]{Proposition}

\newtheorem{corollary}[theorem]{Corollary}

\newtheorem{fact}[theorem]{Fact}

\theoremstyle{definition}
\newtheorem{definition}[theorem]{Definition}

\newtheorem*{assumption*}{Assumption}
\newtheorem{remark}[theorem]{Remark}

\crefname{lemma}{Lemma}{Lemmas}
\crefname{theorem}{Theorem}{Theorems}
\crefname{definition}{Definition}{Definitions}
\crefname{fact}{Fact}{Facts}
\crefname{claim}{Claim}{Claims}
\crefname{proposition}{Proposition}{Propositions}

\newcommand{\E}{\mathbb{E}}
\newcommand{\Var}{\mathrm{Var}}

\renewcommand{\epsilon}{\varepsilon}

\newcommand{\N}{\mathbb{N}}
\newcommand{\Q}{\mathbb{Q}}
\newcommand{\R}{\mathbb{R}}

\renewcommand{\P}{\mathbb{P}}

\newcommand{\beq}{\begin{equation}}
\newcommand{\eeq}{\end{equation}}

% \raggedbottom

% \linespread{1.1}

%\newcommand{\bZappx}{\bm{Z}'}

%{\color{blue!60!black}}
%{\color{magenta!90!red}}
%{\color{green!50!black}}
%{\color{orange}}

\renewcommand\Pr{\mathbf{P}}
\newcommand{\abs}[1]{\left| #1 \right|}

\usepackage{appendix}
\crefname{appsec}{Appendix}{Appendices}
\newcommand{\LDLR}{\textnormal{\textsf{LDLR}}}
\usepackage{xspace}

\let\Erdos\erdos

\let\Renyi\renyi

\definecolor{blueish}{HTML}{1cb5bd} 
\definecolor{greenish}{HTML}{b5b51c} 
\definecolor{pinkish}{HTML}{EE51DE}

\newcommand\cnote[1]{\textsf{\small\color{blueish}{[Chris: #1]}}}
\newcommand\abnote[1]{\textsf{\small\color{greenish}{[Andrej: #1]}}}

\newtheorem{construction}{Construction}{\bfseries}{\normalfont}

\newtheorem{bonjecture}[theorem]{Conjecture}{\bfseries}{\itshape}

\linespread{1.05}

\begin{document}

\title[Low-degree Security of the Planted Random Subgraph Problem]{Low-degree Security of the \\ Planted Random Subgraph Problem}

\author[A. Bogdanov, C. Jones, A. Rosen, I. Zadik]{Andrej Bogdanov$^{\star}$, Chris Jones$^\circ$, Alon Rosen$^{\circ \dagger}$ \\ and Ilias Zadik$^\mathsection$}
\thanks{\raggedright$^\star$University of Ottawa.;
$^\circ$Bocconi University;
$^\dagger$Reichman University;
$^\mathsection$ Yale University. \\ Emails: \texttt{abogdano@uottawa.ca}; 
 \texttt{chris.jones@unibocconi.it}; \texttt{alon.rosen@unibocconi.it}; \texttt{ilias.zadik@yale.edu}}

\maketitle

\begin{abstract}
The planted random subgraph detection conjecture of Abram et al. (TCC 2023) asserts the pseudorandomness of a pair of graphs $(H, G)$, where $G$ is an \Erdos-\Renyi random graph on $n$ vertices, and $H$ is 
a random induced subgraph of $G$ on $k$ vertices.
Assuming the hardness of distinguishing these two distributions (with two leaked vertices), Abram et al. construct communication-efficient, computationally secure (1) 2-party private simultaneous messages (PSM)  and (2) secret sharing for forbidden graph structures. 

\smallskip

We prove the low-degree hardness of detecting planted random subgraphs all the way up to $k\leq n^{1 - \Omega(1)}$.  This improves over Abram et al.'s analysis for $k \leq n^{1/2 - \Omega(1)}$.  The hardness extends to $r$-uniform hypergraphs for constant $r$.

% We prove that proved that the hardness of the planted subgraph problem for $k \leq n^{1 - \Omega(1)}$ implies the existence of (1) two-party private simultaneously protocol and (2) a computational computationally secure secret sharing scheme with smaller share size than what is attainable information theoretically.  They invoked the low-degree analysis framework to give evidence of indistinguishability up to $k \leq n^{1/2 - \Omega(1)}$, but for such $k$ there was no advantage over what could be accomplished information theoretically. For these applications it was also required that the problem remains hard even given leakage of two vertices in $H$.

\smallskip

Our analysis is tight in the distinguisher's degree, its advantage, and in the number of leaked vertices.  Extending the constructions of Abram et al, we apply the conjecture towards (1) communication-optimal multiparty PSM protocols for random functions and (2) bit secret sharing with share size $(1 + \epsilon)\log n$ for any $\epsilon > 0$ in which arbitrary minimal coalitions of up to $r$ parties can reconstruct and secrecy holds against all unqualified subsets of up to $\ell = o(\epsilon \log n)^{1/(r-1)}$ parties.

%The degree is best possible up to $\log \log n$ factors:  the distinguisher that looks for a copy of the subgraph induced by the first $(2 + \epsilon)\log n$ vertices of $H$ inside $G$ succeeds with high probability and has degree $\binom{(2 + \epsilon)\log n}{2}$.

%The planted random subgraph problem involves a pair of random graphs $(H,G)$ with $k$ and $n$ vertices respectively, and asks to distinguish the case where $G$ is a random $n$-vertex graph and $H$ is a random unlabeled $k$-vertex subgraph of $G$ for $k\leq n$.

%A planted random subgraph instance is pseudorandom up to advantage  $n^{-\Omega(1)}$ advantage provided $k \leq n^{1 - \Omega(1)}$.

%Abram et al. confirm the random subgraph conjecture in the low-degree analysis framework but only up to $k \leq n^{1/2 - \Omega(1)}$.  In that regime they obtain no advantage over what can be accomplished with the planted clique conjecture.

%We prove that the planted random subgraph conjecture remains hard for low-degree distinguishers of degree at most $o((\log n / \log \log n)^2)$ in the full range $k \leq n^{1 - \Omega(1)}$. The degree is best possible up to $\log \log n$ factors:  the distinguisher that looks for a copy of the subgraph induced by the first $(2 + \epsilon)\log n$ vertices of $H$ inside $G$ succeeds with high probability and has degree
%$\binom{(2 + \epsilon)\log n}{2}$.

\end{abstract}

\section{Introduction}
In the planted clique model~\cite{jerrum, kucera} one observes the union of an \Erdos-\Renyi random graph $G_0 \sim G(n,1/2)$ and a randomly placed $k=k_n$-clique $H$, i.e., the graph $G=G_0 \cup H$. The goal of the planted clique detection task is to distinguish between observing $G$ from the planted clique model and $G$ which is simply an instance of $G(n,1/2)$. The planted clique conjecture states that the planted clique instance remains pseudorandom whenever $k \leq n^{1/2 - \Omega(1)}$ up to $n^{-\Omega(1)}$ distinguishing advantage.
Conversely, multiple polynomial-time algorithms can distinguish with high probability whenever $k=\Omega(\sqrt{n})$. Research on the planted clique conjecture has gone hand-in-hand with key developments in average-case complexity theory over the last decades, including spectral and tensor algorithms \cite{alon1998finding, frieze2008new}, lower bound techniques for restricted classes including the sum-of-squares hierarchy \cite{barak-hopkins-etal}, low-degree polynomial methods \cite{hopkins-thesis}, statistical query methods \cite{feldman-grigorescu-reyzin-vempala-xiao} and MCMC methods \cite{jerrum,gamarnik2019landscape,chen2023almost}, and the development of new average-case reductions \cite{brennan-bresler, hirahara-shimizu}. 

At this point, the conjectured hardness of the planted clique problem around $k \approx \sqrt{n}$ stands as a central conjecture in average-case complexity. But despite its popularity, the cryptographic applications have been quite limited, with one exception in the symmetric-key setting proposed by Juels and Peinado~\cite{juels-peinado}.  Recently Abram et al.~\cite{abram2023cryptography} revisited the planted clique problem and showed how it can be useful in the context of secret sharing and secure computation. The authors specifically show that (slight variants of) the planted clique conjecture can be used to construct a computationally secure scheme whose share size is much smaller than the best existing information-theoretically secure scheme.

In order to obtain further improvements to the share size, Abram et al. proposed a new intriguing conjecture similar to planted clique. They start by defining the following general model (also introduced in \cite{huleihel2022inferring}). 
%(See Section~\ref{sec:main} for the formal definitions.) \cnote{I took out ``informal'' because this definition seems completely rigorously well-defined to me (with the exception of directed/undirected edges, but whatever) (I agree that it is nicer to the reader that we also included the more explicit definition in Section 2)}
% A strict generalization of the planted (induced) subgraph model, refers to a distribution over $n$ vertex undirected graphs $G$ defined as follows. \cnote{seems confusing and too precise}

\begin{definition}\emph{(Planted (induced) subgraph model\footnote{A similar yet different model where one observes the \emph{union} of a copy of $H$ with an instance of $G(n,1/2)$ has also been recently analyzed in the statistical inference literature \cite{huleihel2022inferring, mossel2023sharp, yu2024counting}. For this work, we solely focus on the ``induced'' variant, where $H$ appears as an induced subgraph of $G.$})} \label{dfn:sub_model} Fix $H$ to be an arbitrary unlabeled subgraph on $k$ vertices. Then $G$ is chosen to be a random $n$-vertex graph where a copy of $H$ is placed on $k$ vertices chosen uniformly at random (as an induced subgraph on the $k$ vertices), and all edges without both endpoints on the $k$ vertices appear with probability 1/2.
\end{definition}

When $H$ is the $k$-clique, the planted subgraph model becomes exactly the planted clique model. The clique is the most structured graph possible and it is natural to wonder:\begin{center} 
   \emph{could the problem be significantly harder if a different graph $H$ is planted?}
\end{center} Abram et al. suggest studying the planted random subgraph model in which $H$ is an instance of $G(k,1/2).$
An equivalent definition is the following.

\begin{definition}\emph{(Planted random subgraph model)} One observes a pair $(H, G)$, where $G$ is a random $n$-vertex graph and $H$ is a random $k$-subgraph of $G$ with the vertex labels removed.
\end{definition}

Abram et al. make the following interesting conjecture.\begin{bonjecture}\label{conj_main} \emph{(Planted Random Subgraph conjecture \cite{abram2023cryptography})}
The planted random subgraph problem is hard up to advantage  $n^{-\Omega(1)}$ provided $k \leq n^{1 - \Omega(1)}$, with high probability over $H \sim G(k,1/2)$ as $n$ grows to infinity. 
\end{bonjecture}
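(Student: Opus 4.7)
The plan is to prove the low-degree version of the conjecture by bounding the low-degree likelihood ratio. Let $\PP$ denote the planted joint distribution on $(H, G)$ and $\QQ$ the null product distribution, so that under both distributions the marginal of $H$ is $G(k, 1/2)$. I would parametrize $\PP$ by a uniformly random injection $\sigma : [k] \to [n]$: sample $G \sim G(n, 1/2)$ and set $H_{ij} = G_{\sigma(i)\sigma(j)}$. Since the $H$-marginals under $\PP$ and $\QQ$ agree, the likelihood ratio simplifies to $L(H, G) = \PP(G \mid H)/\QQ(G)$, and by Markov's inequality it suffices to show $\EE_H\, \|L_H - 1\|_{\le D}^2 = n^{-\Omega(1)}$ in order to obtain the ``with high probability over $H$'' statement.

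For the Fourier computation I would work in the $\pm 1$ basis on the edges of both $H$ and $G$, expanding
\[L_H(G) = \E_\sigma \prod_{ij \in \binom{[k]}{2}} \bigl(1 + H_{ij}\,G_{\sigma(i)\sigma(j)}\bigr).\]
The Fourier coefficient $\widehat{L}_H(F')$ on a set $F' \subseteq \binom{[n]}{2}$ is a linear combination of the $H$-monomials $\chi_F^H = \prod_{ij \in F} H_{ij}$ with coefficients $\Pr_\sigma[\sigma(F) = F']$. Taking $\EE_H$ and using orthogonality of the $H$-basis collapses this into a diagonal sum; after grouping by the isomorphism class $J$ of $F$, the automorphism factors $|\Aut(J)|$ appearing both in the count of labeled embeddings of $J$ into $[k]$ and in $\Pr_\sigma[\sigma(F) = F']$ cancel, yielding the identity
\[\EE_H\, \|L_H\|_{\le D}^2 - 1 = \sum_{J} \frac{(k)_{v(J)}}{(n)_{v(J)}},\]
where $J$ ranges over non-empty unlabeled graphs with no isolated vertex and at most $D$ edges, $v(J)$ is the number of vertices, and $(k)_v = k(k-1)\cdots(k-v+1)$ is the falling factorial.

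To show this sum is $n^{-\Omega(1)}$ when $k \le n^{1-\delta}$ for some constant $\delta > 0$, I would group by vertex count $v$, writing it as $\sum_v (k)_v/(n)_v \cdot N(v, D)$ where $N(v, D)$ counts unlabeled graphs on exactly $v$ non-isolated vertices with at most $D$ edges. Since $(k)_v/(n)_v \le (k/n)^v \le n^{-\delta v}$, I would split the range at $v = \sqrt{2D}$: for $v \le \sqrt{2D}$ use the trivial bound $N(v, D) \le 2^{\binom{v}{2}}$, and for $v > \sqrt{2D}$ use $N(v, D) \le (D+1)\binom{\binom{v}{2}}{D}$. Each tail is then geometric in $v$, and the crossover $v \approx \sqrt{2D}$ determines the tight allowed degree for a given $\delta$.

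The main obstacle I anticipate lies in the ``dense'' regime where $v(J) \approx \sqrt{2|E(J)|}$, so that $J$ resembles a clique. Such shapes are few in count, but their shrinkage factor $(k/n)^{v(J)}$ is the weakest, and they drive the $k \le n^{1 - \Omega(1)}$ threshold. The same template should extend to $r$-uniform hypergraphs by replacing $\binom{[k]}{2}$ with $\binom{[k]}{r}$ in the product above, and to the two-leaked-vertex variant by conditioning on the leak, i.e.\ restricting $\sigma$ to fix the two specified labels before re-running the Fourier expansion.
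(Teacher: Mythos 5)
Your proposal is correct and takes essentially the same approach as the paper: both average the squared low-degree likelihood ratio over the random $H$, exploit orthogonality of the $H$-characters to reduce to a combinatorial sum indexed by vertex count, and then split that sum into a dense regime (bounded by $2^{\binom{v}{2}}$) and a sparse regime (bounded by $(D+1)\binom{\binom{v}{2}}{D}$). The one small refinement in your version is that the automorphism cancellation yields an exact identity $\E_H \lVert L_H - 1\rVert_{\le D}^2 = \sum_J (k)_{v(J)}/(n)_{v(J)}$, whereas the paper works with a slightly looser chain of upper bounds (first conditioning on the images of the two replica injections containing $V(\alpha)$, then on the composed permutation fixing $V(\alpha)$); the difference is only an $e^{O(v)}$ factor and both lead to the same degree threshold up to polylogarithmic factors.
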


This stands in contrast to the case that $H$ is a $k$-clique where a computational phase transition is expected to take place at the smaller value $k \approx n^{1/2}$.

Abram et al. confirm the planted random subgraph conjecture in the low-degree analysis framework (to be described below)
%(that is, for distinguishers that can be expressed as low-degree polynomials)
but only up to the ``planted clique threshold'' $k \leq n^{1/2 - \Omega(1)}$ (a result also independently proven by Huleihel \cite{huleihel2022inferring}). Their work leaves open the regime $ n^{1/2 - \Omega(1)} \leq k \leq n^{1 - \Omega(1)},$ and in particular the question of whether there is a larger window of hardness for planted random subgraph than for planted clique.

%\subsection{Main contribution}
%We prove and generalize the conjecture. 

Our main contribution is the confirmation of Conjecture~\ref{conj_main} in the low-degree framework. We prove that the planted random subgraph problem remains hard for low-degree distinguishers of degree at most $o((\log n / \log \log n)^2)$ in the \emph{full range} $k \leq n^{1 - \Omega(1)}$. The degree is best possible up to $\log \log n$ factors, and the analysis extends also to the case of hypergraphs. See Section~\ref{sec:main} for the precise theorem statement.

\subsection{Secret sharing and leakage}
\label{sec:abrametal}

For their intended cryptographic applications Abram et al. rely on a strengthening of the planted random subgraph conjecture which also allows for leaked additional information about the embedding of $H$ in $G$.  It is easiest to motivate these stronger conjectures through their intended application.

A \emph{(partial) access structure} for $k$ parties is a pair of set systems $R, S$ over $\{1, \dots, k\}$, where $R$ is upward-closed, $S$ is downward-closed, and $R, S$ are disjoint.  A bit secret sharing scheme consists of a randomized sharing algorithm that maps the secret bit $s \in \{0,1\}$ into $k$ shares so that sets in $R$ can reconstruct $s$ from their shares with probability one, while sets in $S$ cannot distinguish $s=0$ or $s=1$.

In a forbidden graph access structure, $R$ is the edge-set of a graph and $S$ is the union of its complement $\{\{u, v\} \not\in R\colon u \neq v \in [k]\}$ and the set $[k]$ of vertices.  Abram et al. propose the following secret sharing scheme for any such structure: 

\begin{construction}
\label{cons:abrametal}
Forbidden graph secret sharing:
\begin{enumerate}
\item The dealer samples a random $n$-vertex graph $G$ and remembers a secret $k$-vertex subgraph $H$ of it randomly embedded via $\phi\colon V(H) \to V(G)$.  

\item The dealer publishes the pair $(H_s, G)$, where $H_s$ is a $k$-vertex graph with adjacency matrix
\begin{equation}
\label{eq:defhs}
H_s(u, v) = \begin{cases}
H(u, v) \oplus s, &\text{if $\{u, v\} \in R$} \\
\text{a random bit}, &\text{otherwise}.
\end{cases} 
\end{equation}

\item  The share of party $v$ is the value $\phi(v) \in [n]$.  
\end{enumerate}

\noindent If $\{u, v\} \in R$, the parties reconstruct by calculating 
\begin{equation}
\label{eq:reconstruct}
H_s(u, v) \oplus G(\phi(u), \phi(v)) = H(u, v) \oplus G(\phi(u), \phi(v)) \oplus s = s. 
\end{equation}
\end{construction}

Secrecy requires that the joint distribution $(H_s, G, \phi(u), \phi(v))$ of the public information and the shares is indistinguishable between $s = 0$ and $s = 1$ provided $\{u, v\} \in S$. In the absence of the ``leakage'' $(\phi(u), \phi(v))$ this is a consequence of the planted random subgraph conjecture (Conjecture \ref{conj_main}).
% I don't see entirely why this is true when R is not the clique. It's ok for now

To handle the leakage, we consider the following generalization. Two parties $\{u,v\} \in S$ know the location of their edge $H(u,v) = G(\phi(u), \phi(v))$ in $G$, which could potentially be useful to search for the ``local structure of $H$'' around their edge.
The new conjecture posits that if $u$ and $v$ have this additional information, they still cannot distinguish whether $H$ is planted. 
% revealing $H(u,v) = G(\phi(u), \phi(v))$ does not reveal anything else about the planted subgraph.  
With an eye towards stronger security we state it below for a general $\ell$.

\begin{bonjecture}\label{conj:leakage}
\emph{(Planted random subgraph conjecture with $\ell$-vertex leakage)} 
With high probability over $H \sim G(k,1/2)$, 
the following two distributions are %\iz{Not defined?}
$n^{-\Omega(1)}$-indistinguishable in polynomial time for all subsets $L = \{u_1, \dots, u_\ell\} \subseteq V(H)$ of size $\ell$:
\begin{enumerate}
    \item (planted) $(H, G, \phi(u_1), \dots, \phi(u_\ell))$ where we choose uniformly at random an injective function $\phi :[k] \to [n]$ and embed $H$ into $G$ on the image of $\phi$. The remaining edges of $G$ are sampled randomly.
    \item (model) $(H, G, \phi(u_1), \dots, \phi(u_\ell))$ where we choose uniformly at random an injective function $\phi : L \to [n]$ and embed the subgraph of $H$ on $L$ into $G$ on the image of $\phi$. The remaining edges of $G$ are sampled randomly.
\end{enumerate}
\end{bonjecture}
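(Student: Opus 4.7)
Since Conjecture~\ref{conj:leakage} asserts indistinguishability against \emph{all} polynomial-time algorithms --- a statement that unconditionally is beyond current techniques --- my target is the strongest form of evidence accessible, namely low-degree hardness against polynomials of degree $D = o((\log n / \log\log n)^2)$, matching the bar set by the paper's main theorem for the no-leakage case. I will describe how to extend that low-degree bound to the leakage setting; upgrading to full polynomial-time indistinguishability is the main conceptual obstacle, addressed at the end.

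The first step reduces the $\ell$-leakage distinguishing task to a conditional, leakage-free planted random subgraph instance at slightly smaller parameters. By exchangeability of the random embedding $\phi$ under permutations of $[n]$, we may condition on $\phi(L) = [\ell]$; other realizations of the leaked values give the same distinguishing advantage. Under this conditioning, the induced subgraph of $G$ on $[\ell]$ equals $H|_L$ in both distributions, so after publicly revealing the leakage and the seed $H|_L$ the two distributions differ only on the non-seed edges. Concretely, in the planted distribution the remaining $k-\ell$ vertices of $H$ are embedded at a uniform injection $\phi' \colon [k] \setminus L \to [n] \setminus [\ell]$ and all edges of $H$ incident to $[k] \setminus L$ are copied into the corresponding positions of $G$; in the model these non-seed edges of $H$ and $G$ are independent uniform bits. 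This is a planted random subgraph instance at parameters $(n-\ell, k-\ell)$, augmented by a boundary-copying rule relating $H|_{L \times ([k]\setminus L)}$ to $G|_{[\ell] \times \phi'([k]\setminus L)}$.

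The second step adapts the paper's main low-degree calculation to this augmented model. Expanding the likelihood ratio in the Fourier basis on the joint $\{\pm 1\}$-alphabet of $H$ and $G$, one sorts characters by their support on \emph{interior} edges (both endpoints in $\phi'([k]\setminus L)$) versus \emph{boundary} edges (one endpoint in $[\ell]$). The interior part reproduces the main theorem's calculation at reduced parameters $(n-\ell, k-\ell)$, which still satisfy $k - \ell \le n^{1 - \Omega(1)}$ whenever $\ell$ is subpolynomial. Boundary characters contribute a controlled multiplicative factor because each boundary position of $H$ is coupled to a \emph{single} position of $G$ (rather than to an unknown pair of positions as for interior edges), which in fact simplifies the counting over matchings. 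A union bound over the $\binom{k}{\ell}$ choices of $L$ together with a Markov bound over $H \sim G(k,1/2)$ then yields the claim with high probability, provided $\ell$ is small enough to be absorbed into the $n^{-\Omega(1)}$ budget.

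The main obstacle to proving Conjecture~\ref{conj:leakage} as literally stated is the well-known gap between low-degree lower bounds and polynomial-time lower bounds. Closing this gap would require either a reduction from a standard cryptographic assumption --- none is currently known for planted random subgraph in the regime $k \le n^{1 - \Omega(1)}$ --- or an unconditional circuit lower bound, both well beyond present techniques. The plan above therefore aims only at the low-degree surrogate, which is the precise sense in which the paper confirms the conjecture; it is also this surrogate that supports the cryptographic applications to PSM and secret sharing described in the sequel.
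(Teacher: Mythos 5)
You correctly identify that this is a conjecture for which one can only give evidence, and your chosen surrogate (low-degree hardness against $D=o((\log n/\log\log n)^2)$ distinguishers) is exactly the one the paper adopts; the high-level scaffolding of conditioning on $\phi|_L=\mathrm{id}$, Fourier expansion, Markov over $H$, and union bound over $L$ also matches. However there are two substantive gaps.

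First, the paper's distinguisher is a degree-$D$ polynomial \emph{in $G$ only}, with coefficients allowed to depend arbitrarily on $H$; it is not a polynomial over ``the joint $\{\pm1\}$-alphabet of $H$ and $G$.'' This is why the main quantity is $\E_H\mathcal{LR}_D(H,L)^{2p}$, a moment over $H$ of a purely-$G$ likelihood ratio, and not a Fourier coefficient on the product alphabet. Bounding a joint Fourier expansion of bounded total degree would implicitly restrict the degree in $H$ as well, which weakens the model and does not directly imply the ``for most $H$'' statement the conjecture requires. Relatedly, your plan to reduce to a leakage-free instance at parameters $(n-\ell,k-\ell)$ ``plus a boundary-copying rule'' and then argue a multiplicative correction for boundary characters does not match the structure of the bound. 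In the paper's combinatorial count (Equation~\eqref{eq:combinatorial-bound}), interior and boundary edges of $G$ appear in a single weighted sum over $\alpha\subseteq\binom{[n]}{r}\setminus\binom{L}{r}$ graded by $|V(\alpha)\setminus L|$; the leakage amplification $2^{\binom{\ell}{r-1}}$ emerges from counting characters with $|V(\alpha)\setminus L|=1$ rather than as a separable boundary factor multiplying an interior term. You would need to make the claimed factorization precise, and it is not obvious it holds.

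Second, and more critically, your union-bound step cannot be carried out with a second-moment (Chebyshev/Markov) bound. With only $\E_H\mathcal{LR}_D(H,L)^2= n^{-\epsilon+o(1)}$, Markov gives a failure probability of $n^{-\epsilon+o(1)}/\eta$ per fixed $L$, which does not survive a union bound over $\binom{k}{\ell}$ sets once $\ell\geq 2$ and $\eta$ is subconstant. The paper instead bounds the $2p$-th moment $\E_H\mathcal{LR}_D(H,L)^{2p}$ for $p\approx \ell\log n$ (Theorem~\ref{thm:main}) and then applies Markov to $\mathcal{LR}_D(H,L)^{2p}$ to get a per-$L$ failure probability below $1/(n\binom n\ell)$, as in Corollary~\ref{cor:max-bd}. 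Without this higher-moment control your argument as written does not close, so you would need to supply the analogue of the $2p$-th-moment ``double cover'' computation rather than leave it at a second-moment calculation plus union bound.
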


Assuming this conjecture with $\ell = 2$, given $(\phi(u), \phi(v))$ for $\{u,v\} \in S$, we claim that both $(H_0, G)$ and $(H_1, G)$ are pseudorandom and hence indistinguishable:  As $\{u, v\} \in S$, the $(u, v)$-th bits of $H_0$ and $H_1$ in $(H_s, G)$ are independent of all the others and cannot be used to distinguish.  Once the $(u, v)$-th bits of $H_0$ and $H_1$ are removed, both $(H_0, G)$ and $(H_1, G)$ become identically distributed to the planted $(H, G)$ with its $(u, v)$-th bit removed. By the conjecture, this model is indistinguishable from a uniformly random string.

The share size in this scheme is $(1 + o(1))\log k$.  In contrast, the most compact known forbidden graph scheme with perfect security has shares of size $\exp \tilde{\Theta}(\sqrt{\log k})$~\cite{liu-vaikutanathan-wee,applebaumbeimeletal2019}.
Statistical security requires shares of size $\log k - O(1)$ when $R$ is the complete graph~\cite{abram2023cryptography}.  It is not known if computational security is subject to the same limitation.

Under the $\ell$-vertex leakage assumption the secrecy holds not only against pairs of parties that are not an edge in $R$, but also against all independent sets up to size $\ell$, i.e., 
\[ S = \{I\colon \text{$I$ is an independent set of $R$ and $\abs{I} \leq \ell$}\}. \]

By passing to $r$-hypergraphs instead of graphs, we naturally extend the construction to $R$ which is an arbitrary subset of at most $r$ parties, with security against all size-$\ell$ independent sets of $R$ (see Construction~\ref{cons:atmostr} below).  The most compact known perfectly secure forbidden $r$-hypergraph scheme has share size $\exp \tilde{\Theta}(\sqrt{r\log k})$~\cite{liu-vaikutanathan-wee} whereas our share size is still $(1+o(1))\log k$.

It would be interesting to obtain a provable separation in share size between the computationally secure Construction~\ref{cons:atmostr} and the best possible perfectly secure construction for some access structure.  In Section~\ref{sec:secretsharing} we explain why this is challenging using available methods.

\subsection{Private simultaneous messages (PSM)}  In a PSM, Alice and Bob are given inputs $x, y$ to a public function $F\colon [k]^2 \to \{0, 1\}$.  They calculate messages $\phi(x), \phi(y)$ which are securely forwarded to Carol.  Carol needs to output the value $F(x, y)$ without learning any information about $x$ and $y$ beyond this value.

Abram et al. propose the following PSM protocol.  In a setup phase, $F$ viewed as a bipartite graph is randomly embedded into an otherwise random host graph $G$ via $\phi$.  The graph $G$ is given to Carol and the embedding $\phi$ is given to Alice and Bob.  Carol outputs $G(\phi(x), \phi(y))$ which must equal $F(x, y)$.  

%\iz{We need to do it for bipartite?} \abnote{Not necessary because $F$ can be embedded in a non-bipartite graph.
%It is explained in Section~\ref{sec:psm}.}  

Abram et al. argue that this protocol is ``secure'' for a $(1 - o(1))$-fraction of functions $F$ under Conjecture~\ref{conj:leakage} with leakage $\ell = 2$.  Their security definition appears to additionally assume that the choice of inputs $(x, y)$ is independent of the function $F$.  In contrast, our security definition in Section~\ref{sec:psm} allows for Alice and Bob to choose their inputs jointly from some distribution that depends on the description of $F$.  This is more natural for potential  cryptographic applications; Alice and Bob should not be expected to commit to their input before they know which function they are computing.
We extend our low-degree analysis to support this stronger notion of security.
% In Section~\ref{sec:psm} we show that our security notion is achieved under a strengthening of Conjecture~\ref{conj:leakage} in which the leaked set $L$ may depend on the choice of planted subgraph $H$.  This stronger conjecture is supported by our low-degree analysis.\cnote{I just changed the statement of Conjecture~\ref{conj:leakage} to incorporate this, since it also makes sense to me to require it for a secret sharing scheme}
% \abnote{This is the statement at the end of section 2.1, it would be good to state it as a proper corollary (with explanation of the parameter ranges) so it can be referenced here.  Would it be possible to make the error explicit in terms of the various parameters rather than choosing a specific setting that vanishes with $n \to \infty$?}\cnote{do you like Corollary \ref{cor:max-bd}?} \abnote{I love it!}

Messages in this protocol are of length $\log n = (1 + \epsilon)\log k$.
In contrast, perfect security is known to require combined message length $\abs{\phi(x)} + \abs{\phi(y)} \geq (3 - o(1))\log k$~\cite{feige-kilian-naor, applebaum-holenstein-mishra-shayevitz} (but it is not known if statistical security is subject to the same bound).

% The $\ell$-vertex leakage variant of the conjecture guarantees security even if the protocol is applied $\ell/2$ times on locally distinct inputs.  

The $r$-hypergraph variant of the conjecture with leakage $\ell = r$ gives PSM security for $r$-party protocols also with message size $\log n = (1 + \epsilon) \log k$ (Section~\ref{sec:psm}).  Even without a security requirement the message size must be at least $(1 - o(1))\log k$ for the protocol to be correct on most inputs.

% It is also natural to consider a stronger, non-uniform notion of security in which $H$ is initially published to all parties, then Carol is allowed to run an arbitrary computation on $H$, then Carol receives $(G, \phi(u), \phi(v))$ and must efficiently compute $H(u,v)$.  Our result gives evidence that the PSM protocol of Abram et al. is still secure for this setting.

\subsection{Low-degree lower bounds}
We provide evidence for these conjectures in the form of lower bounds against the low-degree polynomial computational model (see e.g., \cite{kunisky2019notes} and references therein).
In this model, fixing a parameter $D=D_n$, the distinguishing algorithm is allowed to compute an arbitrary degree-$D$ polynomial function of the bits of the input over the field $\R$.
The algorithm succeeds if the value of the polynomial is noticeably different between the random and planted models.
Degree-$D$ polynomials serve as a proxy for $n^{O(D)}$ time computation since a degree-$D$ polynomial in $\text{poly}(n)$ input bits can be evaluated by brute force in time $n^{O(D)}$ (ignoring numerical issues).

Surprisingly, for noise-robust\footnote{Noise-robustness means that the planted structure is resilient to small random perturbations \cite{hopkins-thesis,holmgren2020counterexamples}.} hypothesis testing problems it has been conjectured that whenever all degree-$D$ polynomials with $D=O(\log n)$ fail (formally, no polynomial strongly separates the two distributions \cite[Section 7]{cojaoghlan2022statistical}), then no polynomial-time distinguisher succeeds. This is now known as the ``low-degree conjecture'' of Hopkins \cite{hopkins-thesis}.
Based on this heuristic, a provable failure of $O(\log n)$-degree polynomials to strongly separate the two distributions provides a state-of-the-art prediction of the hard and easy regimes for the problem of interest.

It should be noted that there exists a certain weakness in existing low-degree hardness evidence for the planted clique problem, which also applies to our lower bound for the planted random subgraph problem (and that of \cite{abram2023cryptography}). Both planted clique and planted random subgraph technically do not satisfy the noise-robust assumption of the low-degree conjecture because the planted isomorphic copy of $H$ in the graph $G$ is not robust to small perturbations of $G$ (if $0.01$ fraction of the edges of $G$ are randomly flipped then the copy of $H$ will be destroyed).
Noise-robustness is an important assumption; in fact, in a handful of carefully chosen noise-free problems, low-degree methods are provably weaker than other brittle polynomial-time methods such as Gaussian elimination or lattice-basis reduction techniques \cite{zadik2022lattice}. That being said, the existing techniques do not appear applicable to graph settings such as planted clique or the planted random subgraph model.

\section{Our result} 
\label{sec:main}
Let $H$ be an $r$-uniform hypergraph over vertex set $[k]$ chosen uniformly at random
(i.e., each $r$-hyperedge between the vertices of $[k]$ is included independently with probability half). Let $L \subseteq V(H)$ of size $\ell$.
Let $\P_{H, L}$ and $\Q_{H, L}$ be the following distributions over $r$-uniform hypergraphs $G$ with vertex set $[n]$, where $n \geq k \geq \ell$:
\begin{enumerate}
\item In the \emph{planted distribution} $\P_{H, L}$, an injective map $\phi\colon [k] \to [n]$ is chosen uniformly at random among all injective maps conditioned on $\phi(u) = u$ for $u \in L$.  The hyperedges of $G$ are
\[ G(u_1, \dots, u_r) = \begin{cases}
H(\phi^{-1}(u_1), \dots, \phi^{-1}(u_r)), &\text{if $\phi^{-1}(u_1), \dots, \phi^{-1}(u_r)$ exist} \\
\text{a random bit},  &\text{otherwise}.
\end{cases}
\]
\item In the \emph{null distribution} $\Q_{H, L}$, %again an injective map $\phi\colon [k] \to [n]$ is chosen uniformly at random among all injective maps conditioned on $\phi(u) = u$ for $u \leq \ell$ % 
the hyperedges of $G$ are
\[ G(u_1, \dots, u_r) = \begin{cases}
H(u_1, \dots, u_r), &\text{if $u_1, \dots, u_r \in L$} \\
\text{a random bit}, &\text{otherwise}.
\end{cases}
\]
\end{enumerate}
Uniform $r$-hypergraphs on $n$ vertices are represented by their adjacency maps $\binom{[n]}{r} \to \{\pm 1\}$, with $-1$ and $1$ representing the presence and absence of a hyperedge, respectively.

In words, the hypergraph $G\sim \P_{H, L}$ drawn from the planted model has the public hypergraph $H$ embedded into a uniform choice of $k$ vertices, and is otherwise purely random. However, the location of $L \subseteq V(H)$ is fixed and public information.
% On the other hand, let $H_{\ell}$ be the induced hypergraph obtained by restricting $H$ to the first $\ell$ vertices.
The hypergraph $G \sim \Q_{H, L}$ drawn from the random model copies the subgraph of $H$ on $L$, but it does not use the part of $H$ outside of $L$; all remaining edges of the graph are chosen purely at random. 
Note that in both models, the marginal distribution of $G$ is a uniformly random hypergraph,
but distinguishers know $H$ and $L$.

In the case $r = 2$ of graphs, there is a slight difference between the distributions $\P_{H, L}, \Q_{H, L}$ and those described in the Introduction, namely that we have imposed the condition $\phi(u) = u$ on the leaked vertices in $L$.
This condition is without loss of generality, and in particular, it does not affect the complexity of distinguishing $\P_{H, L}$ from $\Q_{H, L}$.

Following the low-degree framework \cite{kunisky2019notes}, we consider the degree-$D$-likelihood ratio $\mathcal{LR}_D(H, L)$,
\[ \mathcal{LR}_D(H, L) = \sup\limits_{\substack{p \in \R[G(\vec{u}) : \vec{u} \in \binom{[n]}{r}]\\ \deg p \le D}} \mathrm{Adv}_p(H, L) \] 
where
\[ \mathrm{Adv}_p(H, L) = \frac{\E_{\P_{H, L}}[p(G)] - \E_{\Q_{H, L}}[p(G)]}{\sqrt{\Var_{\Q_{H, L}}[p(G)]}}. \]
% \textcolor{red}{IZ: is that how advantage is used in crypto? With expectation and variance etc?} \abnote{In the standard definition of computational indistinguishability $p$ is the output of an arbitrary efficient algorithm so it is bounded in $[0, 1]$.}
Here $p \in \R[G(\vec{u}) : \vec{u} \in \binom{[n]}{r}]$ denotes a multivariate polynomial in the quantities $G(u_1, \dots, u_r)$ for $(u_1,\dots, u_r) \in \binom{[n]}{r}$ with degree at most $D$. $\mathcal{LR}_D(H, L)$ measures the best advantage of a degree-$D$ polynomial distinguisher that can arbitrarily preprocess $H$ and knows $L$.
Whenever $\mathcal{LR}_D(H, L)=o(1)$ then no $D$-degree polynomial can achieve strong separation between $\P_{H, L}$ and $\Q_{H, L}$ \cite[Section 7]{cojaoghlan2022statistical}.

To gain intuition on the performance of low-degree polynomials, let us start with the simplest one, which is the bias of the edges of the hypergraph $G$:
\[ p(G) = \sum_{1 \leq u_1 < \cdots < u_r \leq n} G(u_1, \dots, u_r). \]
Assume for simplicity that $L = \emptyset$.
It holds by direct expansion,
\begin{align*}
\E_{\P_{H}}[p(G)] &= \sum_{1 \leq u_1 < \dots < u_r \leq k} H(u_1, \dots, u_r) \\
% \E_{\Q_{H}}[p(G)] &= \sum_{1 \leq u_1 < \dots \leq u_r \leq \ell} H(u_1, \dots, u_r) \\
\E_{\Q_{H}}[p(G)] &= 0 \\
\Var_{\Q_{H}}[p(G)] &= \tbinom{n}{r}.
\end{align*}
The likelihood ratio is
\[ \mathrm{Adv}_p(H) = \Theta\biggl(\frac{\E_{\P_{H}}[p(G)]}{n^{r/2}}\biggr). \]
As $\E_{\P_{H}}[p(G)]$ is a sum of the 
$\binom{k}{r}$ hyperedge indicators for $H$, $\E_{\P_{H}}[p(G)]$ would have value $\pm \Theta(k^{r/2})$ for a typical choice of $H$, resulting in an advantage of $\Theta((k/n)^{r/2})$ (after optimizing between $p(G)$ or $-p(G)$). The advantage is $o(1)$ when $k \leq n^{1-\Omega(1)}$ and therefore the distinguisher fails in this regime. Yet, when $k=\Theta(n)$ the calculation suggests the count distinguisher succeeds with $\Omega_r(1)$ probability which indeed can be confirmed by being a bit more careful in the above analysis. %\footnote{The ``converse'' that if the advantage of $p$ explodes then $p$ can distinguish between $\P_H$ and $\Q_H$ is not generally true. \cnote{it's not formally always true, but it's usually true}}
% \textcolor{red}{IZ: Shall we do this?}. \abnote{As the advantage is $\Theta((k/n)^{r/2})$ it is clearly constant when $k = \Theta(n)$, no?}
Our main theorem shows that other low-degree polynomials cannot substantially improve upon the edge-counting distinguisher.

\begin{theorem}\label{thm:main}
Assume for some $p \in \mathbb{N}$ and constant $\epsilon>0$,
%\cnote{since we include $\epsilon$ in the bounds, we should try to handle non-constant $\epsilon$}
the following bounds hold on the size of $H$, $k$, the leakage number $\ell$ and the degree $D$:

\begin{enumerate}
    \item $k \leq (n - \ell) n^{-\epsilon} / 24 p^2 D^2 + \ell$
     % \item  $k \leq (n-\ell)n^{-\epsilon}/e+\ell$
    \item $\ell \leq \min\{k, \;\epsilon^{1/(r-1)} r(\log n)^{1/(r-1)}/40\}$ and,
    \item $D \leq \epsilon^3 \left(\log n\right)^{r/(r-1)}/\bigl(\frac{r}{r-1}\log \log n\bigr).$ %\cnote{the exponent was $(\log n)^{4r/(r-1)}$ but I removed the factor of 4. Was that a typo? There is a distinguisher of degree $D = (\log n)^{r/(r-1)}$ so the likelihood ratio will blow up at this $D$}
\end{enumerate}
Then for any $L \subseteq [k]$ with $|L| = \ell$,
\[ \left(\E_H \mathcal{LR}_D(H, L)^{2p}\right)^{1/p} \leq  
\frac{2^{\binom{\ell}{r - 1}} n^{-\epsilon}}{1 - n^{-\epsilon/2}} + \exp\left(-\Omega\left( r(\epsilon \log n)^{1+1/(r-1)}\right)\right).
\]
\end{theorem}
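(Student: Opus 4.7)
The strategy is a Fourier-analytic low-degree computation, carried out at the $2p$-th moment in $H$ and with careful tracking of the leakage vertices $L$. Under $\Q_{H,L}$ the only random hyperedges are the \emph{free} ones (those with at least one vertex outside $L$), so Fourier-expanding the likelihood ratio in the characters $\chi_T = \prod_{\vec u \in T} G(\vec u)$ over sets $T$ of free hyperedges, and averaging the conditional density $d\P_{H,L}(\cdot \mid \phi)/d\Q_{H,L}$ over the random embedding $\phi$, gives
\[
\mathcal{LR}_D(H,L)^2 = \sum_{\substack{T \neq \emptyset \\ |T| \leq D}} f(T)^2,
\qquad
f(T) = \frac{(n-\ell-m)!}{(n-\ell)!}\sum_{\psi} \prod_{\vec u \in T} H(\psi \vec u),
\]
where $m = |V(T) \setminus L|$ and $\psi$ ranges over injective maps $V(T) \to [k]$ that restrict to the identity on $V(T) \cap L$.

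Next I would raise to the $p$-th power, open the squares to introduce $2p$ embeddings $\{\psi^{(1)}_i,\psi^{(2)}_i\}_{i=1}^p$ associated with a $p$-tuple $(T_1,\dots,T_p)$, and take $\E_H$. Because $H$ is a uniform $\pm 1$ hypergraph, the resulting expectation equals $1$ precisely when every hyperedge of $[k]$ in the multiset $\bigcup_{i,j}\psi^{(j)}_i(T_i)$ has even multiplicity, and $0$ otherwise. This reduces the problem to a purely combinatorial sum over such \emph{even configurations}, weighted by $\prod_i \alpha_{m_i}^2$ and by the number of embeddings into $[n]$. The core local estimate: each free vertex outside $L$ appearing in a configuration contributes at most $k-\ell$ embedding choices into $[k]$ but only $1/(n-\ell)$ per $\alpha$-factor, yielding a net suppression of $(k-\ell)/(n-\ell) \leq n^{-\epsilon}/(24 p^2 D^2)$ per free vertex by hypothesis~(1).

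I would then split the combinatorial sum by configuration size. The minimal configurations consist of a single free vertex $v \in [k]\setminus L$ together with an arbitrary subset of the $\binom{\ell}{r-1}$ possible hyperedges from $v$ into $L$ (``stars through $L$''); summing over this subset produces exactly the factor $2^{\binom{\ell}{r-1}}$ (these hyperedges are automatically paired-up across the two copies $j \in \{1,2\}$, so the even-multiplicity constraint is free). Iterating such fragments across the $p$ copies produces a geometric series in $n^{-\epsilon/2}$ (the $1/2$ arising when the $p$-th root is distributed via H\"older/Minkowski), summing to the closed form $2^{\binom{\ell}{r-1}} n^{-\epsilon}/(1-n^{-\epsilon/2})$. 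For ``large'' configurations with $m$ free vertices and up to $D$ hyperedges, the number of isomorphism classes of $r$-hypergraphs is bounded by $\binom{\binom{m}{r}}{D}$, and an elementary counting inequality for $r$-hypergraphs forces $m \gtrsim (rD)^{(r-1)/r}$ once the configuration has $D$ hyperedges; hypothesis~(3) then guarantees that the $n^{-\epsilon m}$ savings dominate the configuration count, and the resulting tail is bounded by $\exp(-\Omega(r(\epsilon\log n)^{1+1/(r-1)}))$.

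The main obstacle is controlling the interaction of the $p$ copies under the even-multiplicity constraint: different copies can overlap partially in intricate ways (sharing some free vertices and/or some hyperedges in $[k]$ while being otherwise disjoint), and to extract the clean leading constant $2^{\binom{\ell}{r-1}}$ rather than the cruder bound $2^{p\binom{\ell}{r-1}}$ one would get from a trivial product, the counting must be organized at the level of the fully merged multi-hypergraph on $[k]$, with a colouring on hyperedges recording which copy/embedding each edge belongs to, and then the contribution of this coloured object must be bounded uniformly in its colouring before taking the $p$-th root. Hypothesis~(2), $\ell \leq \epsilon^{1/(r-1)} r(\log n)^{1/(r-1)}/40$, is calibrated precisely so that $2^{\binom{\ell}{r-1}}$ does not overwhelm the $n^{-\epsilon/2}$ per-vertex savings in the geometric series; the factor $24 p^2 D^2$ in hypothesis~(1) is the slack needed to absorb the combinatorial cross-terms from the $2p$-fold expansion without disturbing the final $n^{-\epsilon}$ saving.
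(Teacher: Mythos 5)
Your high-level framework matches the paper's: Fourier-expand the likelihood ratio in characters over the free hyperedges, introduce $2p$ replica embeddings, take $\E_H$, and reduce to a combinatorial sum indexed by configurations whose edges must appear with even multiplicity; the per-free-vertex suppression factor $\approx (k-\ell)/(n-\ell)$ and the origin of $2^{\binom{\ell}{r-1}}$ as the number of stars from a single extra vertex into $L$ are also correctly identified, as is the use of $\binom{m}{r}\choose D$-style counting in the large-$v$ regime. Your formula for $f(T)$ is correct.

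However, you have not resolved the one step you yourself flag as ``the main obstacle,'' and this is where the paper's proof diverges in a way that actually dissolves the obstacle rather than grinding through it. The paper does \emph{not} organize the $2p$ embeddings into a merged multi-hypergraph on $[k]$ with an edge-colouring, and does not try to bound a coloured object ``uniformly in its colouring.'' Instead it makes a two-step union bound directly on the probabilities: (i) $\Pr[\forall i,\ V(\alpha_i)\setminus L \subseteq S'_i\cap S''_i]\leq((k-\ell)/(n-\ell))^{2V}$ with $V=\sum_i|V(\alpha_i)\setminus L|$; (ii) for the even-multiplicity condition to hold, the $2p$ embeddings must form a \emph{double cover} of the union of their images, meaning every image vertex is hit by at least two embeddings, and the probability of this is bounded by enumerating partitions of the $2V$ mapped vertices into blocks of size $\ge 2$ (at most $(2V)^{2V}$ of them) and paying a $\frac{1}{k-\ell}$-type factor for every non-leading block member, giving $(2V)^{2V}/(k-\ell)^{\underline{V}}$. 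The decisive feature is that after inserting the crude bound $V\le pD$, the resulting estimate is a pure \emph{product} $\prod_i\bigl(8p^2D^2(k-\ell)/(n-\ell)^2\bigr)^{|V(\alpha_i)\setminus L|}$, so the sum over $(\alpha_1,\dots,\alpha_p)$ factorizes exactly into a $p$-th power of a \emph{single-copy} sum, and the $p$-th root is taken before any $2^{\binom{\ell}{r-1}}$ factor appears. The $p^2D^2$ slack in hypothesis~(1) is exactly what absorbs the $(2V)^{2V}$ factor, not ``combinatorial cross-terms'' in the loose sense you describe.

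Two smaller points. First, the $1/2$ in $1-n^{-\epsilon/2}$ has nothing to do with H\"older/Minkowski; it comes from a free parameter $\delta$ in a geometric-series bound for the small-$v$ regime, instantiated at $\delta=\epsilon/4$. Second, your heuristic $m\gtrsim (rD)^{(r-1)/r}$ is not used (and is dimensionally off -- $\binom{m}{r}\ge D$ gives $m\gtrsim rD^{1/r}$); the paper's high/low cut is at $v\approx (r-1)(\delta\log n)^{1/(r-1)}/e - \ell$, which is tied to $\log n$ rather than to $D$. As written your plan would still require solving the merged-colouring bookkeeping, which is genuinely intricate; the double-cover factorization is the missing idea that makes the argument go through cleanly.
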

% \cnote{For computational hardness we need to assume $k \gg \ell$. If $\ell = k+1$, there is a distinguisher (enumerate the vertices outside of $L$ and check if they have the correct neighbors in $L$. This signature is unique if $k \gg \log n$). This is implemented by a degree $k$ polynomial which might be too large anyway...} \abnote{If $\ell \geq k \geq \log n$ the right hand side is $\gg 1$ and the theorem statement is vacuous, no?} \cnote{sorry, I meant $k = \ell + 1$ i.e. we leak all but one vertex. Then there is a way to detect the presence of the remaining vertex (I'm not sure if it's low-degree)} \cnote{I looked through the proof again and somehow this does not seem to appear}
In particular, for $p=1$, $\ell=o( (\log n)^{1/(r-1)}), $ and $\epsilon=\Omega(1)$ \[\E_H \mathcal{LR}_D(H, L)^2=n^{-\epsilon + o(1)}.\]

The bound is tight in the following ways:
% \textcolor{red}{IZ: prove all the below?} \abnote{No time to do this for the submission version but we should later}
\begin{enumerate}
\item \textbf{Degree:}  The bound on $D$ is optimal (for constant $\epsilon$) up to a factor of $O(\log \log n)$.  A degree-$O((r\log n)^{r/(r-1)})$ distinguisher with high advantage and time complexity $2^{O((r\log n)^{1/(r-1)})}$ exists.  This is the algorithm that looks for the presence of a subgraph in $G$ that is identical to the one induced by the first $O(r^{r/(r-1)} (\log n)^{1/(r-1)})$ vertices in $H$. 
\item \textbf{Leakage:}  When $\binom{\ell}{r-1} \geq \log(2n)$ the distinguishing advantage is constant (for any $k > \ell$).  The distinguisher that looks for the existence of a vertex in $G$ whose adjacencies in $L$ match those of an arbitrary vertex in $H$ outside $L$ has constant advantage, degree $\binom{\ell}{r-1}$, and time complexity $O(n \binom{\ell}{r-1})$.

\item \textbf{Advantage:}  The edge-counting distinguisher described above has advantage $(k/n)^{r/2} = n^{-\epsilon r/2}$.  Our proof can show a matching lower bound in the absence of leakage. When leakage is present, assuming $\ell > r - 1$, the linear distinguisher
\[ \mathrm{sign} \sum_{v \not\in L} G(1, \dots, r - 1, v) = \mathrm{sign} \sum_{v \not\in L} H(1, \dots, r - 1, v) \]
has squared advantage $\Omega((k - \ell)/(n - \ell)) = \Omega(n^{-\epsilon})$ which matches the theorem statement.
\end{enumerate}

\subsection{Our proof} 

Abram et al. obtain their result as a consequence of a worst-case bound for arbitrary planted $H$:  They prove that for all graphs $H$ with $k \leq n^{1/2-\epsilon}$ vertices,
\[ \mathcal{LR}_D(H, L) \leq o(1) \,.\]
As $k = n^{1/2}$ is tight for clique their method cannot prove a better bound.  In contrast, we average the likelihood ratio over the choice of $H$, showing that $\E_H[\mathcal{LR}_D(H, L)^2]$ is small all the way up to $k \leq n^{1-\epsilon}$. By taking the expectation over $H$, we introduce extra cancellations that are necessary to obtain the stronger bound.

By Markov's inequality
\[\Pr_H[\mathcal{LR}_D(H, L)^2 \geq \eta] \leq \frac{\E_H[\mathcal{LR}_D(H, L)^2]}{\eta}\,.\]
A vanishing expectation implies concentration, namely $\mathcal{LR}_D(H, L) = o(1)$ for a $1 - o(1)$ fraction of $H$.

The above calculation bounds the advantage for a fixed leakage set $L$.
In order to bound the advantage of an arbitrary set $L$ for the cryptographic applications, we also bound the higher moments of $\mathcal{LR}_D(H, L)$.
%
% \begin{theorem}\label{thm:main-moments}
%     For $p \in \mathbb{N}$, under the assumptions of Theorem \ref{thm:main} with 1. strengthened to:
%     \begin{itemize}
%         \item [1'.] $k \leq (n - \ell) n^{-\epsilon} / 8e p^2 D^2 + \ell$
%     \end{itemize}
%     then:
%     \[\left(\E_H \mathcal{LR}_D(H, L)^{2p}\right)^{1/p} \leq \frac{2^{\binom{\ell}{r - 1}} n^{-\epsilon}}{1 - n^{-\epsilon/2}} + \exp\left(-\Omega\left( r(\epsilon \log n)^{1+1/(r-1)}\right)\right)\,.\]
% \end{theorem}
%
% On the one hand, we could use this inequality to obtain an upper bound
% on the moment generating function of $\mathcal{LR}_D(H, L)$ which yields
% a Chernoff-Hoeffding style bound.
% On the other hand, we can also just make a direct argument with Markov's inequality.
%
Using $p = \ell \log n$ and applying Markov's inequality with $\eta = 4n^{-\epsilon+o(1)}$
\begin{align*}
    \Pr_H[\mathcal{LR}_D(H, L)^2 &\geq \eta] \leq \frac{\E_H[\mathcal{LR}_D(H, L)^{2p}]}{\eta^{p}}\\
    &\leq \left(\frac{n^{-\epsilon + o(1)}}{\eta}\right)^p\\
    &= 4^{-\ell \log n} \leq \frac{1}{n\binom{n}{\ell}}\,.
\end{align*}
Taking a union bound over the $\binom{k}{\ell}$ choices for $L$, we can deduce the stronger result that no leakage set $L$ can attain advantage $\eta$:
\[
    \Pr_H\biggl[\max\nolimits_{\substack{L \subseteq V(H)\\|L|=\ell}} \mathcal{LR}_D(H,L)^2 \geq 4n^{-\epsilon + o(1)} \biggr] \leq o(1)\,.
\]
We summarize the final bound on the low-degree advantage for Conjecture \ref{conj:leakage} as the following corollary, which includes the parameters.
\begin{corollary}\label{cor:max-bd}
    For all $p \in \mathbb{N}$ and $\eta > 0$,
    \begin{multline*}
    \Pr_H\biggl[\max\nolimits_{\substack{L \subseteq V(H)\\|L|=\ell}} \mathcal{LR}_D(H,L)^2 \geq \eta \biggr] \\
    \leq \binom n \ell \eta^{-p}\left( \frac{2^{\binom{\ell}{r - 1}} n^{-\epsilon}}{1 - n^{-\epsilon/2}} + \exp\left(-\Omega\left( r(\epsilon \log n)^{1+1/(r-1)}\right)\right) \right)^p \,.
    \end{multline*}
\end{corollary}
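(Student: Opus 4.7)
The plan is to derive this corollary as a routine consequence of Theorem \ref{thm:main} via Markov's inequality and a union bound over leakage sets, exactly as sketched in the paragraphs immediately preceding the corollary statement.

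For any fixed $L \subseteq V(H)$ of size $\ell$, Theorem \ref{thm:main} provides the $2p$-th moment bound
\[
\E_H[\mathcal{LR}_D(H,L)^{2p}] \leq \left(\frac{2^{\binom{\ell}{r-1}} n^{-\epsilon}}{1 - n^{-\epsilon/2}} + \exp\bigl(-\Omega(r(\epsilon \log n)^{1+1/(r-1)})\bigr)\right)^p.
\]
Applied to the nonnegative random variable $\mathcal{LR}_D(H,L)^{2p}$, Markov's inequality gives, for any threshold $\eta>0$,
\[
\Pr_H[\mathcal{LR}_D(H,L)^2 \geq \eta] = \Pr_H[\mathcal{LR}_D(H,L)^{2p} \geq \eta^p] \leq \eta^{-p}\, \E_H[\mathcal{LR}_D(H,L)^{2p}].
\]
A union bound over all subsets $L \subseteq V(H)$ of size $\ell$ then completes the argument. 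Since $|V(H)| = k \leq n$, there are at most $\binom{k}{\ell} \leq \binom{n}{\ell}$ such sets, and multiplying the per-$L$ tail bound by this combinatorial factor yields exactly the displayed inequality.

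There is essentially no technical obstacle: Theorem \ref{thm:main} carries all of the substance, and the corollary merely repackages its average-case moment bound as a worst-case-over-$L$ high-probability statement. The only minor choice to be made is whether to use the sharper $\binom{k}{\ell}$ or the slightly larger $\binom{n}{\ell}$ in the union bound; I would use $\binom{n}{\ell}$ because $n \geq k$ holds throughout, and this yields a cleaner statement whose right-hand side does not depend on $k$ and hence is more convenient when the corollary is subsequently invoked in the cryptographic applications.
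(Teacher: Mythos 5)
Your proof is correct and matches the paper's own argument: Theorem~\ref{thm:main} gives the $2p$-th moment bound, Markov's inequality converts it to a per-$L$ tail bound, and a union bound over the at most $\binom{n}{\ell}$ choices of $L$ gives the result. The paper's sketch in the preceding paragraphs uses $\binom{k}{\ell}$ in the union bound but states the final corollary with $\binom{n}{\ell}$, consistent with your choice.
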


\section{Proof of Theorem~\ref{thm:main}}

Viewed as an $\binom{n}{r}$-dimensional vector, every $G$ in the support of $\Q_{H, L}$ decomposes as $(G', G_L)$, where $G_L$ is the subgraph of $G$ on $L$ and $G'$ is the remaining part (indexed by $r$-subsets that have at least one vertex in $[n] \setminus L$).

We start by claiming that without loss of generality, all polynomial distinguishers of interest are constant in the coordinates of $G_L$. Indeed, in both the planted $\P_{H, L}$ and null distributions $\Q_{H, L}$, the status of the hyperedges in $L$ is always fixed. As fixing the $L$-indexed inputs can only lower the degree of the distinguishing polynomial $p$, this assumption holds without loss of generality.

In the null $G'$ is simply uniformly random in $\{\pm 1\}^{\binom{[n]}{r} \setminus \binom{L}{r}}$, i.e., $\Q_{H, L}(G', G_L) = \Q(G')$, where $\Q$ is the uniform distribution.  Now, let us focus on $G'$ for the planted $\P_{H, L}$. We can describe the distribution $\P'_{H, L}$ of $G'$ as follows:
\begin{enumerate}
\item Choose a random subset $S'$ of $k - \ell$ vertices in $[n] \setminus L$.
\item Choose a random permutation $\pi'\colon S' \to [k] \setminus L$. Extend $\pi'$ to a permutation from $S' \cup L$ to $[k]$ by setting $\pi'(u) = u$ for all $u \in L$.
\item Set
\[ G'(u_1, \dots, u_r) = 
\begin{cases}
H(\pi'(u_1), \dots, \pi'(u_r)), &\text{if $u_1, \dots, u_r \in S' \cup L$} \\
\text{a random bit}, &\text{otherwise}.
\end{cases}
\]
\end{enumerate}

Using the above observations we have,
\begin{align*}
    \mathcal{LR}_D(H, L) &= \displaystyle\sup\limits_{\substack{p \in \R[G(\vec{u}) : \vec{u} \in \binom{[n]}{r}]\\ \deg p \le D}} \frac{\E_{\P_{H, L}}[p(G)] - \E_{\Q_{H, L}}[p(G)]}{\sqrt{\Var_{\Q_{H, L}}[p(G)]}}\\
    &= \displaystyle\sup\limits_{\substack{p \in \R[G'(\vec{u}) : \vec{u} \in \binom{[n]}{r} \setminus \binom{[\ell]}{r}]\\ \deg p \le D}} \frac{\E_{\P'_{H, L}}[p(G')] - \E_{\Q}[p(G')]}{\sqrt{\Var_{\Q}[p(G')]}}
\end{align*}

% \cnote{shortened this set of equations since it was distracting from the main computation and it's standard}
% \begin{align*}
%     \mathcal{LR}_D(H) &\leq \sup_{p, \deg p \leq D} \frac{\E_{\P'_H}[p(G')] - \E_{\Q}[p(G')]}{\sqrt{\Var_{\Q}[p(G')]}}\\
%     &= \sup_{p, \deg p \leq D}\frac{\E_{\Q}[p(G')\left(\frac{\P'_H(G')}{\Q(G')}-1\right)]}{\sqrt{\Var_{\Q}[p(G')]}}\\
%     &= \E_{\Q}\left[\left(\left(\frac{\P'_H(G')}{\Q_H(G')}-1\right)^{\leq D}\right)^2\right]\\
%     &= \E_{\Q}\left[\left(\left(\frac{\P'_H(G')}{\Q_H(G')}\right)^{\leq D}\right)^2\right]-1,
% \end{align*}where by $p(G')^{\leq D}$ we refer to the projection of a polynomial $p(G')$ to the space of $D$-degree polynomials under the uniform measure and for the last two equalities we use some standard linear algebraic arguments in the literature of the low-degree method, see for example \cite[Lemma 7.2]{cojaoghlan2022statistical}.

Since the null distribution $\Q$ is a product measure, by a standard linear algebraic argument in the literature of the low-degree method (see \cite{kunisky2019notes} or \cite[Lemma 7.2]{cojaoghlan2022statistical}), the optimal degree-$D$ polynomial takes an explicit form.
Using the expansion with respect to the Fourier-Walsh basis $\{\chi_\alpha(G')=\prod_{ e \in \alpha} G'_e, \alpha \subseteq \binom{[n]}{r} \setminus \binom{[\ell]}{r}\}$, the explicit formula for the squared advantage is
\begin{align}\label{eq:lr-fml}
\mathcal{LR}_D(H, L)^2 &= \sum_{\substack{\alpha \subseteq \binom{[n]}{r} \setminus \binom{L}{r} \\ 1 \leq \abs{\alpha} \leq D}}
\widehat{\mathcal{LR}}(\alpha | H, L)^2
\end{align}
where 
\[\widehat{\mathcal{LR}}(\alpha | H, L) = \E_{\Q}\frac{\P'_{H, L}(G')}{\Q(G')}\chi_\alpha(G')=\E_{\P_{H, L}'} \chi_\alpha(G'). \]
Now we expand the square on the right-hand side of \eqref{eq:lr-fml} and take the expectation over $H$.
\begin{equation}
\label{eq:replica}
\E_H \mathcal{LR}_D(H, L)^2 =   \sum_{\substack{\alpha \subseteq \binom{[n]}{r} \setminus \binom{L}{r} \\ 1 \leq \abs{\alpha} \leq D}} \E \chi_\alpha(G')\chi_\alpha(G''), 
\end{equation}
where the right-hand expectation is now taken over both the choice of $H$ and the choice of two independent ``replicas'' $G', G''$ sampled from $\P'_H$.  The joint distribution of $G'$ and $G''$ is determined by the independent choices of $H$, the subsets $S'$, $S''$, and the permutations $\pi'$, $\pi''$.
Equation \eqref{eq:replica} gives a formula for the second moment of the likelihood ratio with respect to the random variable $H$, which we spend the rest of this section evaluating; higher moments will be computed later.

We fix $\alpha \subseteq \binom{[n]}{r} \setminus \binom{L}{r}$ and upper bound the expectation. Since we are considering the expectation of a Fourier character, it will often be zero. Let $V(\alpha)$ be the set of vertices in $[n]$ spanned by $\alpha$.  If $S' \cup L$ or $S'' \cup L$ does not entirely contain $V(\alpha)$ then the expectation is zero: if, say, $e \in \alpha'$ is not contained in $S' \cup L$, then $G'(e)$ is independent of all other bits appearing in $\E \chi_\alpha(G')\chi_\alpha(G'') = \prod_{e \in \alpha} G'(e)G''(e)$ resulting in a value of zero.  Therefore
\begin{multline}
\label{eq:eprod}
\E[\chi_\alpha(G') \chi_\alpha(G'')] \\
\begin{aligned}
&= \E[\chi_\alpha(G') \chi_\alpha(G'')\ |\ S' \cap S'' \supseteq V(\alpha) \setminus L] \cdot \Pr[S' \cap S'' \supseteq V(\alpha) \setminus L] \\
&= \E[\chi_\alpha(G') \chi_\alpha(G'')\ |\ S' \cap S'' \supseteq V(\alpha) \setminus L] \cdot \Pr[S' \supseteq V(\alpha) \setminus L]^2
\end{aligned}
\end{multline}
by independence of $S'$ and $S''$.  As $S'$ is a random $k$-subset of $[n] \setminus L$, 
\begin{align}
\Pr[S' \supseteq V(\alpha) \setminus L] 
&= \frac{(k-\ell)(k-\ell-1)\cdots (k - \ell - |V(\alpha) \setminus L|+1)}{(n-\ell)(n-\ell - 1)\cdots(n-\ell - |V(\alpha) \setminus L|+1)} \notag \\
&\leq \left(\frac{k - \ell}{n - \ell}\right)^{\abs{V(\alpha) \setminus L}}\,. \label{eq:p1}
\end{align}

Conditioned on both $S'$ and $S''$ containing $V(\alpha) \setminus L$, 
\begin{align}
\chi_\alpha(G') \chi_\alpha(G'')
&= \prod_{(u_1, \dots, u_r) \in \alpha} G'(u_1, \dots, u_r) G''(u_1, \dots, u_r) \nonumber\\
&= \prod_{(u_1, \dots, u_r) \in \alpha} H(\pi'(u_1), \dots, \pi'(u_r)) H(\pi''(u_1), \dots, \pi''(u_r)). \label{eq:prod_H}
\end{align}
As $H$ consists of i.i.d. zero mean $\pm 1$ entries, this expression vanishes in expectation unless every hyperedge in the collection 
\[ (\psi(u_1), \dots, \psi(u_r))\colon (u_1, \dots, u_r) \in \alpha, \psi \in \{\pi', \pi''\} \]
appears exactly twice, in which case the product equals to one. This is only possible if $\pi\colon S' \to S''$ given by $\pi = (\pi'')^{-1} \circ \pi'$ restricts to an automorphism of $\alpha$.  In particular, $\pi$ must fix the set $V(\alpha)$.  As $\pi$ outside $L$ is a permutation which is chosen uniformly at random, we conclude that \eqref{eq:prod_H} is upper bounded by,
\begin{align} 
\Pr[\text{$\pi$ fixes $V(\alpha)$}] 
&= \frac{\abs{V(\alpha) \setminus L}!}{(k - \ell)(k - \ell - 1) \cdots (k - \ell - \abs{V(\alpha) \setminus L} + 1)} \notag \\
&\leq \biggl(\frac{\abs{V(\alpha) \setminus L}}{k - \ell}\biggr)^{\abs{V(\alpha) \setminus L}}. \label{eq:p2}
\end{align}
Plugging \eqref{eq:p1} and \eqref{eq:p2} into \eqref{eq:eprod} and then into \eqref{eq:replica} yields
\begin{equation}\label{eq:combinatorial-bound}
\E \mathcal{LR}_D(H, L)^2 \leq 
\sum_{\substack{\alpha \subseteq \binom{[n]}{r} \setminus \binom{L}{r} \\ 1 \leq \abs{\alpha} \leq D}} \biggl(\frac{\abs{V(\alpha) \setminus L}(k-\ell)}{(n - \ell)^2} \biggr)^{\abs{V(\alpha) \setminus L}}.
\end{equation}

\iffalse
The second probability is
\begin{align*}
    \Pr_{\substack{\phi_1 : V(\alpha) \to [k]\\\phi_2 : V(\alpha) \to [k]\\\phi_1, \phi_2\text{ injective}}}[\phi_1(\alpha) = \phi_2(\alpha)] &= \frac{\abs{\mathrm{Aut}(\alpha)}}{k(k-1)\cdots(k-|V(\alpha)|+1)}\\
    &\lesssim \left(\frac{\abs{V(\alpha)}}{k}\right)^{|V(\alpha)|}
\end{align*}
where the inequality holds up to a multiplicative factor of $1+o(1)$ using the following fact and $|V(\alpha)| \leq 2D \ll n^\epsilon \leq k$.
\begin{fact}
    For $t = o(\sqrt{n})$,
    $n(n-1)\cdots(n-t+1) = (1+o(1))n^t$.
\end{fact}

In total,
\begin{align*}
    % &= \sum_{\substack{\al \subseteq \binom{[n]}{2}\\0 < |\al| \leq D}} \left(\frac{(k-\el)(k-\el-1)\cdots (k - |V(\al) \setminus L(\al)|+1)}{(n-\el)(n-\el - 1)\cdots(n-|V(\al) \setminus L(\al)|+1)}\right)^2 \frac{\abs{\Aut(\al)|}}{k(k-1)\cdots(k-|V(\al)|+1)}\\
    % &\leq \sum_{\substack{\al \subseteq \binom{[n]}{2}\\0 < |\al| \leq D}} \left(\frac{k(k-1)\cdots (k - |V(\al)|+1)}{n(n-1)\cdots(n-|V(\al)|+1)}\right)^2 \frac{|V(\al)|!}{k(k-1)\cdots(k-|V(\al)|+1)}\\
    \E \mathcal{LR}_D(H)& \lesssim \sum_{\substack{\alpha \subseteq \binom{[n]}{r}\\0 < |\alpha| \leq D}} \left(\frac{k}{n}\right)^{2|V(\alpha) \setminus L(\alpha)|} \left(\frac{|V(\alpha)|}{k}\right)^{|V(\alpha)|}\\
    &= \sum_{\substack{\alpha \subseteq \binom{[n]}{r}\\0 < |\alpha| \leq D}} \left(\frac{k|V(\alpha)|}{n^2}\right)^{|V(\alpha) \setminus L(\alpha)|} \left(\frac{|V(\alpha)|}{k}\right)^{|L(\alpha)|}
\end{align*}
\fi

This bound only depends on the hypergraph $\alpha$ through $\abs{V(\alpha) \setminus L}$. For $v=1,\ldots,rD$ let 
\begin{equation}\label{eq:nvd}
N(v,D) = \bigl|\bigl\{\alpha \subseteq \tbinom{[n]}{r} \setminus \tbinom{L}{r} : |V(\alpha) \setminus L| = v, \; 1 \le |\alpha| \le D\bigr\}\bigr|\,.
\end{equation}
Grouping the terms on the right-hand side by the value of $v = \abs{V(\alpha) \setminus L}$ gives
\begin{equation}
\label{eq:boundwithN}
\E \mathcal{LR}_D(H)^2 \leq \sum_{v=1}^{rD} N(v, D) \cdot 
\biggl(\frac{v(k-\ell)}{(n - \ell)^2} \biggr)^v.
\end{equation}  
To finish the proof we will demonstrate that this sum is dominated by the leading term $v = 1$.  We split this proof using the following two propositions.

In the first proposition, we bound the ``low'' vertex size part.
\begin{proposition}
\label{prop:low}
Assume that $e(k-\ell)/(n-\ell) \leq n^{-\epsilon}$. Then for every $0<\delta < \epsilon$ it holds for sufficiently large $n,$
\[ \sum_{v=1}^{\lfloor t \rfloor } N(v, D) \biggl(\frac{v(k-\ell)}{(n - \ell)^2} \biggr)^v \leq 2^{\binom{\ell}{r - 1}} \cdot 
\frac{n^{-\epsilon}}{1 - n^{-\epsilon + \delta}},
\]
where 
\begin{align}\label{eq:t}
    t := e^{-1}(r-1) (\delta \log n)^{1/(r-1)} - \ell
\end{align}
\end{proposition}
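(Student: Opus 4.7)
The plan is to combine a direct counting estimate for $N(v,D)$ with the observation that, for $v\leq t$, the terms in the sum are geometrically decreasing in $v$.

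First I will give a simple upper bound on $N(v,D)$. Fixing a set $V\subseteq [n]\setminus L$ of size $v$ that will serve as $V(\alpha)\setminus L$, every hyperedge we are allowed to include in $\alpha$ must lie inside $V\cup L$ and must not be contained in $L$. The number of such candidate hyperedges is
\[
E_v \;:=\; \tbinom{v+\ell}{r} - \tbinom{\ell}{r} \;=\; \sum_{i=0}^{v-1}\tbinom{\ell+i}{r-1},
\]
using Pascal's identity. Discarding the constraint $|\alpha|\leq D$ gives the bound $N(v,D)\leq \binom{n-\ell}{v}\,2^{E_v}$. Substituting this into \eqref{eq:boundwithN}, and using $\binom{n-\ell}{v}\leq (n-\ell)^v/v!$ together with Stirling's inequality $v^v/v!\leq e^v$, the $v$-th term of the sum satisfies
\[
T_v \;:=\; N(v,D)\Bigl(\tfrac{v(k-\ell)}{(n-\ell)^2}\Bigr)^v \;\leq\; 2^{E_v}\Bigl(\tfrac{e(k-\ell)}{n-\ell}\Bigr)^v.
\]

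Next I will control the ratio of consecutive terms. Using $E_{v+1}-E_v=\binom{\ell+v}{r-1}$ and the hypothesis $e(k-\ell)/(n-\ell)\leq n^{-\epsilon}$,
\[
\frac{T_{v+1}}{T_v} \;\leq\; 2^{\binom{\ell+v}{r-1}}\cdot\frac{e(k-\ell)}{n-\ell} \;\leq\; 2^{\binom{\ell+v}{r-1}}\, n^{-\epsilon}.
\]
The core quantitative step is to show $\binom{\ell+v}{r-1}\leq \delta\log_2 n$ for every $v\leq t$, so that the ratio is bounded by $n^{-\epsilon+\delta}$. From \eqref{eq:t}, $(\ell+t)^{r-1}=e^{-(r-1)}(r-1)^{r-1}\,\delta\log n$, and Stirling's bound $(r-1)!\geq \sqrt{2\pi(r-1)}\,((r-1)/e)^{r-1}$ then yields
\[
\binom{\ell+v}{r-1} \;\leq\; \frac{(\ell+t)^{r-1}}{(r-1)!} \;\leq\; \frac{\delta\log n}{\sqrt{2\pi(r-1)}} \;\leq\; \frac{\delta\log n}{\log 2}
\]
for all $r\geq 2$, which is exactly the required inequality. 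Noting that $T_1\leq 2^{\binom{\ell}{r-1}}\,n^{-\epsilon}$ (from $E_1=\binom{\ell}{r-1}$ and the hypothesis on $k$), summing the resulting geometric series gives
\[
\sum_{v=1}^{\lfloor t\rfloor} T_v \;\leq\; T_1\sum_{j=0}^{\infty}n^{-j(\epsilon-\delta)} \;\leq\; \frac{2^{\binom{\ell}{r-1}}\,n^{-\epsilon}}{1-n^{-\epsilon+\delta}},
\]
which is the target bound; the geometric series converges for $n$ large enough because $\delta<\epsilon$.

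The only delicate point is the Stirling estimate in the middle step, which crucially uses that $t$ is defined so that $(\ell+t)^{r-1}$ matches $\delta\log n$ up to the factor $e^{r-1}/(r-1)^{r-1}$ that cancels with $(r-1)!$. Everything else is a routine combinatorial upper bound followed by summing a geometric series, with the pleasant feature that the $v=1$ term already carries the combinatorial prefactor $2^{\binom{\ell}{r-1}}$ appearing in the statement.
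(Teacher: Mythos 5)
Your proof is correct and follows essentially the same approach as the paper: bound $N(v,D)$ by $\binom{n-\ell}{v}\,2^{\binom{v+\ell}{r}-\binom{\ell}{r}}$, reduce the $v$-th summand to $2^{E_v}\bigl(e(k-\ell)/(n-\ell)\bigr)^v$, and argue geometric decay by showing $\binom{\ell+v}{r-1}\le\delta\log_2 n$ for $v\le t$. The only cosmetic difference is that the paper expresses the monotonicity through the exponent function $f(v)=-\delta v\log_2 n+\binom{v+\ell}{r}-\binom{\ell}{r}$ and uses $\binom{a}{b}\le(ea/b)^b$ together with $\log n\le\log_2 n$, while you bound the ratio $T_{v+1}/T_v$ directly and invoke the full Stirling estimate to produce the factor $\sqrt{2\pi(r-1)}\ge\log 2$; the two handle the base-$2$ versus natural-log bookkeeping in equivalent ways.
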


In the second proposition, we bound the ``high'' vertex size part.
\begin{proposition}
\label{prop:high}
Assume that $e(k-\ell)/(n-\ell) \leq n^{-\epsilon}$. Assume also that for some $\delta>0$ for which $0<\delta < \epsilon$, it holds
\begin{enumerate}
    
    \item  $\ell \leq (r/9) (\delta \log n)^{1/(r-1)}$
    
    and,

    \item $D \leq \epsilon \delta^2 (\log n)^{r/(r-1)}/\left(\frac{r}{r-1} \log \log n \right) $.
\end{enumerate}
Then for $t$ given in \eqref{eq:t} if also $\delta <1/4$ it holds,
\[ \sum_{v=\lfloor t \rfloor +1}^{rD} N(v, D) \biggl(\frac{v(k-\ell)}{(n - \ell)^2} \biggr)^v \leq \exp\left(-\Omega(\delta^{1/(r-1)}\epsilon r (\log n)^{r/(r-1)})\right).
\] 
\end{proposition}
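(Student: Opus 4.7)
The plan is to (i) bound $N(v,D)$ by a simple combinatorial count, (ii) verify that the summand is monotonically decreasing in $v$ above the cutoff $t$ so the entire sum reduces to a constant times its value at $v = \lfloor t \rfloor + 1$, and (iii) confirm that the negative gain $n^{-\epsilon t}$ from the geometric factor dominates the combinatorial overhead at this leading term.

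For (i), any $\alpha$ counted in \eqref{eq:nvd} is specified by a $v$-subset $S \subseteq [n]\setminus L$ containing $V(\alpha)\setminus L$, together with a choice of between $1$ and $D$ hyperedges from the $M(v) := \binom{v+\ell}{r} - \binom{\ell}{r}$ available hyperedges in $\binom{S\cup L}{r} \setminus \binom{L}{r}$. This gives
\[
N(v,D) \le \binom{n-\ell}{v}\binom{M(v)}{\le D}.
\]
Using $\binom{n-\ell}{v} \le (e(n-\ell)/v)^v$ and the hypothesis $e(k-\ell)/(n-\ell) \le n^{-\epsilon}$, the $v$-th summand of the sum is at most
\[
\binom{M(v)}{\le D}\,\Bigl(\frac{e(k-\ell)}{n-\ell}\Bigr)^v \le \binom{M(v)}{\le D}\, n^{-\epsilon v}.
\]

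For (ii), I would invoke the standard estimate $\binom{M(v)}{\le D} \le (eM(v)/D)^D$, valid since a short calculation from the hypotheses gives $M(v) \ge D$ throughout the range. Taking logarithms and treating $v$ as continuous,
\[
\log\!\Bigl(\binom{M(v)}{\le D}\, n^{-\epsilon v}\Bigr) \le D\bigl(r\log(v+\ell) - \log D + O_r(1)\bigr) - \epsilon v\log n.
\]
Differentiating in $v$ gives $Dr/(v+\ell) - \epsilon\log n$, which is negative whenever $v+\ell > Dr/(\epsilon\log n)$. The upper bound on $D$ implies $Dr/(\epsilon\log n) = O\bigl(\delta^2 (\log n)^{1/(r-1)}/\log\log n\bigr)$, which is $o(t)$, so the summand is decreasing for every integer $v > t$, and the whole sum is at most $rD$ times its value at $v = \lfloor t\rfloor + 1$.

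For (iii), the explicit form of $t$ together with $\ell \le (r/9)(\delta\log n)^{1/(r-1)}$ yields
\[
\epsilon t\log n \ge \bigl(e^{-1}(r-1) - r/9\bigr)\,\epsilon\,\delta^{1/(r-1)}(\log n)^{r/(r-1)}.
\]
The main technical point — and the only real obstacle — is controlling the combinatorial overhead $D\log(eM(t)/D)$. The upper bound on $D$ is calibrated precisely so that $\log D$ nearly cancels $\log M(t)$: one has $\log M(t) = \frac{r}{r-1}\log\log n + O_{r,\delta}(1)$, and (worst case) $\log D \le \frac{r}{r-1}\log\log n + O_{r,\delta}(1)$ as well, so $\log(eM(t)/D) = O(\log\log\log n) + O_{r,\delta}(1)$. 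The worst case indeed occurs at $D$ at the hypothesis ceiling, because $D \mapsto D\log(eM/D)$ is increasing for $D < M$. Consequently $D\log(eM(t)/D) = o\bigl(\epsilon\delta^2(\log n)^{r/(r-1)}\bigr)$, which is asymptotically dominated by $\epsilon t\log n$ for any $\delta < 1$, in particular for $\delta < 1/4$. Combining these estimates and absorbing the $rD$ prefactor into the $\Omega$ delivers the claimed bound $\exp\bigl(-\Omega(r\,\epsilon\,\delta^{1/(r-1)}(\log n)^{r/(r-1)})\bigr)$.
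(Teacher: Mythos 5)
Your proof is correct and follows essentially the same strategy as the paper: bound $N(v,D)$ by the vertex-set-then-hyperedges count $\binom{n-\ell}{v}\binom{M(v)}{\le D}$, cancel the $v^v$ factors so the summand inherits the geometric weight $n^{-\epsilon v}$, and verify that the combinatorial overhead $D\log\bigl(eM/D\bigr)$ is dominated by $\epsilon t\log n$ under the stated degree condition. The only stylistic difference is that you prove the summand is monotone decreasing for $v > t$ and peg the whole sum to $v=\lfloor t\rfloor+1$ (yielding the tighter $\log(eM(t)/D)$), whereas the paper separately maxes the factor $\bigl(e(v+\ell)/r\bigr)^{rD}$ at $v=rD$ — both estimates are adequate given the degree bound.
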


Notice now that directly combining both the Propositions for $\delta=\epsilon/4$ directly implies Theorem \ref{thm:main}.

\subsection{Proof of Proposition \ref{prop:low}}
\begin{proof}
For fixed $v$, the set $V(\alpha) \setminus L$ can be chosen in $\binom{n - \ell}{v}$ ways.  The subset $\alpha$ can then include any
of the hyperedges in $V(\alpha)$ of which there are at most $\binom{v + \ell}{r}$, except those that at completely contained in $L$ of which there are $\binom{\ell}{r}$, leading to the bound:
\begin{equation}
\label{eq:Nsmall}
N(v, D) \leq \binom{n-\ell}{v} \cdot 2^{\binom{v + \ell}{r} - \binom{\ell}{r}}.
\end{equation}

Bounding $N(v, D)$ by~\eqref{eq:Nsmall} and using the standard binomial coefficient bound $\binom{a}{b} \leq (ea/b)^b$, the left hand side is at most
\[ \sum_{v=1}^t \biggl(\frac{e(k-\ell)}{n-\ell} \biggr)^v 2^{\binom{v + \ell}{r} - \binom{\ell}{r}} 
\]
As $e(k-\ell)/(n-\ell) \leq n^{-\epsilon} = n^{-\epsilon + \delta} \cdot n^{\delta}$, this is bounded by
\begin{equation}
\label{eq:bound1}
\sum_{v=1}^t n^{-(\epsilon - \delta)v} \cdot 2^{-\delta v \log_2 n + \binom{v + \ell}{r} - \binom{\ell}{r}}. 
\end{equation}
Let $f(v) = -\delta v \log_2 n + \binom{v + \ell}{r} - \binom{\ell}{r}, v \geq 1$.  For all integer $v \geq 1$,
\[ f(v+1)-f(v)=-\delta \log_2 n+\binom{v + \ell}{r-1} \leq -\delta \log n + \biggl(\frac{e(v + \ell)}{r - 1}\biggr)^{r-1}.
\]
By the definition of $t$, this is negative when $1 \leq v \leq t$, so $f(v)$ is maximized at $v = 1$. Therefore~\eqref{eq:bound1} is at most
\[ \sum_{v=1}^{\lfloor t \rfloor} n^{-(\epsilon - \delta)v} \cdot 2^{-\delta \log n + \binom{\ell + 1}{r} - \binom{\ell}{r}} \leq 2^{\binom{\ell}{r - 1}} \cdot 
\frac{n^{-\epsilon}}{1 - n^{-\epsilon + \delta}}
\]
using the identity $\binom{\ell + 1}{r} - \binom{\ell}{r} = \binom{\ell}{r-1}$ and the geometric sum formula. \hfill\qed
\end{proof}

\subsection{Proof of Proposition \ref{prop:high}}
\begin{proof}

When $v$ is large, the bound~\eqref{eq:Nsmall} can be improved by taking into account that at most $D$ of the hyperedges can be chosen:
\begin{align*}
    N(v, D) &\leq \binom{n-\ell}{v} \cdot D \binom{\binom{v + \ell}{r} - \binom{\ell}{r}}{D}\\
    &\leq D\biggl(\frac{e(n - \ell)}{v}\biggr)^v \cdot \biggl(\frac{e\binom{v + \ell}{r}}{D}\biggr)^D\\
 & \leq \biggl(\frac{e(n - \ell)}{v}\biggr)^v \cdot \biggl(\frac{e(v + \ell)}{r}\biggr)^{rD}
 \cdot D\biggl(\frac{e}{D}\biggr)^{D}.
\end{align*}
Under the assumption $e(k - \ell)/(n - \ell) \leq n^{-\epsilon}$ the summation of interest is at most
\begin{multline*} 
\sum_{v=t+1}^{rD} \biggl(\frac{e(k - \ell)}{n-\ell}\biggr)^v \cdot \biggl(\frac{e(v + \ell)}{r}\biggr)^{rD}
 \cdot D\biggl(\frac{e}{D}\biggr)^{D} \\
\begin{aligned}
& \leq rD^2\biggl(\frac{e}{D}\biggr)^{D} \cdot n^{-\epsilon t} \biggl(\frac{e(rD + \ell)}{r}\biggr)^{rD}\\
& \leq rD^2\biggl(\frac{e}{D}\biggr)^{D} \cdot n^{-\epsilon t} \bigl(e(D + \ell)\bigr)^{rD}\,. %\\ & \leq rD^2\biggl(\frac{e}{D}\biggr)^{D} \cdot n^{-\epsilon t} \bigl(2eD\bigr)^{rD}.
\end{aligned}
\end{multline*}
As $D \leq \epsilon \delta^2 (\log n)^{r/(r-1)}/(\frac{r}{r-1} \log \log n)$ and $\ell \leq (r/9) (\delta \log n)^{1/(r-1)}$, for sufficiently large $n$,
\begin{align*}
D \log \left((D+\ell)/(\epsilon \delta^2) \right) &\leq \epsilon \delta^2 (\log n)^{r/(r-1)},
\end{align*}Hence, for sufficiently small constant $0<\delta<1,$ for sufficiently large $n$ it holds
\begin{align*}
D \log \left(e(D+\ell)) \right) &\leq \epsilon \delta^2 (\log n)^{r/(r-1)},
\end{align*}
Using also the elementary inequality $D^2 (e/D)^D \leq 8$ we conclude that the summation of interest is at most
\begin{align*}
8rn^{-\epsilon t} \exp \left(\epsilon \delta^2 (\log n)^{r/(r-1)}\right).
\end{align*}
Plugging in the direct bound from the definition of $t$ and the upper bound on the leaked vertices,
\begin{align*}
t &\geq \frac{r}{7} (\delta \log n)^{1/(r-1)}
\end{align*}
we conclude that the summation of interest is at most
\begin{align*}
8r\exp \left(-\epsilon \frac{r-1}{e} \delta^{1/(r-1)} (\log n)^{r/(r-1)}+\epsilon \delta^2 (\log n)^{r/(r-1)}\right).
\end{align*}Choosing now $\delta<1/4$  concludes the result. \hfill\qed
\end{proof}

%\begin{proof}[Proof of %Theorem~\ref{thm:main}] 
%Set $\delta = \epsilon/2$ plug in the %bounds from Propositions~\ref{prop:low} and~\ref{prop:high} into %\eqref{eq:boundwithN}. \hfill\qed
%\end{proof}

\subsection{Extension to higher moments}

% Now we consider the more general case of interest where we want to say $$\Pr(\max_{L:|L|=\ell} \mathcal{LR}^2_{L,D}(H) \geq o(1))=o(1)$$where the leakage set $L$ is now worst-case given $H$ (hence not assumed to be the first $[\ell]$ vertices).

Now we extend the calculation in Theorem \ref{thm:main} from $p =1$ to higher $p$.
The $2p$-th moment of $\mathcal{LR}_{D}(H, L)$ is
\begin{align*}
    \E_H \mathcal{LR}_{D}(H, L)^{2p} &= \E_H \Biggl(\sum_{\substack{\alpha \subseteq \binom{[n]}{r} \setminus \binom{L}{r} \\ 1 \leq \abs{\alpha} \leq D}} \E_{\substack{G' \sim \P'_H\\G'' \sim \P'_H}} \chi_\alpha(G')\chi_\alpha(G'')\Biggr)^p\\
    &= \sum_{\substack{\alpha_1, \dots, \alpha_p \subseteq \binom{[n]}{r} \setminus \binom{L}{r} \\ 1 \leq \abs{\alpha_i} \leq D}} \E \prod_{i=1}^p\chi_{\alpha_i}(G'_i)\chi_{\alpha_i}(G''_i)
\end{align*}
where the expectation is over $H$ and also over the replicas $G'_i, G''_i$ sampled independently from $\P'_H$.
Each $G'_i$ is equivalently sampled as $S'_i$ and $\pi_i'$ (and likewise $G''_i$ as $S''_i$ and $\pi''_i$).

Fix the Fourier characters $\alpha_1, \dots, \alpha_p$ and let $V(\alpha_i)$ be the set of vertices in $[n]$ spanned by $\alpha_i$.
First, the expectation is only nonzero if all of the sets $S'_i$ and $S''_i$ contain $V(\alpha_i) \setminus L$. By \eqref{eq:p1} this occurs with probability at most
\begin{equation}\label{eq:s-moment}
    \Pr\left[\forall i \in [p].\; S'_i \cap S''_i \supseteq V(\alpha_i) \setminus L\right] \leq \left(\frac{k-\ell}{n-\ell}\right)^{2\sum_{i = 1}^p |V(\alpha_i) \setminus L|}\,.
\end{equation}
Conditioned on this event,
\begin{align*}
    \prod_{i=1}^p\chi_{\alpha_i}(G'_i)\chi_{\alpha_i}(G''_i) &= 
    \prod_{i=1}^p \chi_{\pi'_i(\alpha_i)}(H) \chi_{\pi''_i(\alpha_i)}(H)\,.
\end{align*}
When the expectation is taken over $H$, this is only nonzero if every hyperedge appears an even number of times among the collection of edges
\[C := (\psi_i(u_1), \dots, \psi_i(u_r)) : i \in [p], \; (u_1, \dots, u_r) \in \alpha_i, \;\psi_i \in \{\pi'_i, \pi''_i\}\,.\]
In order for this to occur, every vertex in the image of the $\psi_i$ must be in the image of at least two $\psi_i$.
Let us say that the collection of embeddings is a \emph{double cover} if this occurs. Then
\begin{align}\label{eq:dc}
    &\quad \E_{H, \pi'_i, \pi''_i} \prod_{i=1}^p \chi_{\pi'_i(\alpha_i)}(H) \chi_{\pi''_i(\alpha_i)}(H) \nonumber \\
    &= \Pr_{\pi'_i, \pi''_i}[C \text{ is an even collection}] \nonumber\\
    &\leq \Pr_{\pi'_i, \pi''_i}[(\pi'_i, \pi''_i)_{i \in [p]} \text{ is a double cover}]\,.
\end{align}
Let $V = \sum_{i=1}^p |V(\alpha_i) \setminus L|$.
We claim
\begin{equation}\label{eq:dc-bound}
    \Pr_{\pi'_i, \pi''_i}[(\pi'_i, \pi''_i)_{i \in [p]} \text{ is a double cover}]
    \leq \frac{(2V)^{2V}}{(k-\ell)(k-\ell-1)\cdots (k-\ell-V+1)}\,. %\leq \left(\frac{4V^2}{k - \ell}\right)^V\,.
\end{equation}
This is based on the following surjection a.k.a union bound.
The total number of vertices mapped by all the permutations is $2V$.
We take any partition of the $2V$ vertices such that every block of the partition has size at least two. There are at most $(2V)^{2V}$ such partitions.
We go through the vertices in some fixed order, and for each vertex which
is not the first member of its block of the partition, we obtain a factor of ${\scriptstyle\approx}\frac{1}{k-\ell}$ for the probability that the vertex is mapped to the same element as the other members of its block of the partition.
Since the blocks have size at least two (in order to be a double cover),
we obtain at least $V$ factors of ${\scriptstyle\approx}\frac{1}{k-\ell}$ in this way.
We upper bound ${\scriptstyle \approx}\frac 1{k-\ell}$ by a rising factorial to obtain the bound in \eqref{eq:dc-bound}.

If $V \leq \frac{k-\ell}{2}$, then
\eqref{eq:dc-bound} can simplified to
\begin{align}
    \frac{(2V)^{2V}}{(k-\ell)(k-\ell-1)\cdots (k-\ell-V+1)} \leq \left(\frac{8V^2}{k - \ell}\right)^V\,.
\end{align}
On the other hand, if $V \geq \frac{k - \ell}{2}$, then the right-hand side is at least 1.
Combining these two possible cases, we conclude,
\begin{equation}\label{eq:dc-final}
        \Pr_{\pi'_i, \pi''_i}[(\pi'_i, \pi''_i)_{i \in [p]} \text{ is a double cover}] \leq \left(\frac{8V^2}{k-\ell}\right)^V\,.
\end{equation}

Now we return to the main calculation of $\E_H\mathcal{LR}_D(H, L)^{2p}$.
Combining \eqref{eq:s-moment}, \eqref{eq:dc-final},
\begingroup
\allowdisplaybreaks
\begin{align*}
    \E_H \mathcal{LR}_{D}(H, L)^{2p} &= \sum_{\substack{\alpha_1, \dots, \alpha_p \subseteq \binom{[n]}{r} \setminus \binom{L}{r} \\ 1 \leq \abs{\alpha_i} \leq D}} \E \prod_{i=1}^p\chi_{\alpha_i}(G'_i)\chi_{\alpha_i}(G''_i)\\
    &\leq \sum_{\substack{\alpha_1, \dots, \alpha_p \subseteq \binom{[n]}{r} \setminus \binom{L}{r} \\ 1 \leq \abs{\alpha_i} \leq D}} \left(\frac{8V^2(k-\ell)}{(n-\ell)^2}\right)^V \\
    &\leq \sum_{\substack{\alpha_1, \dots, \alpha_p \subseteq \binom{[n]}{r} \setminus \binom{L}{r} \\ 1 \leq \abs{\alpha_i} \leq D}} \left(\frac{8p^2D^2(k-\ell)}{(n-\ell)^2}\right)^{\sum_{i=1}^p |V(\alpha_i)\setminus L|} & (V \le pD)\\
    &= \Biggl(\sum_{\substack{\alpha\subseteq \binom{[n]}{r} \setminus \binom{L}{r} \\ 1 \leq \abs{\alpha} \leq D}} \left(\frac{8p^2D^2(k-\ell)}{(n-\ell)^2}\right)^{|V(\alpha)\setminus L|}\Biggr)^p
\end{align*}
\endgroup
The inner summation is nearly the combinatorial quantity we bounded in Equation \eqref{eq:combinatorial-bound} when computing $\E_H\mathcal{LR}_D(H, L)^2$.
The only difference is the factor $8p^2D^2$ which may be larger than what we had before.
This factor can be negated by scaling down $\frac{k-\ell}{n-\ell}$. Using the same counting arguments as before with the slightly stronger assumption on $k$, we conclude the desired moment bound.

\section{Cryptographic applications}

\subsection{Hypergraph secret sharing}
\label{sec:secretsharing}

The secret sharing scheme of Abram et al. was stated for forbidden graph access structures.  The construction extends to partial access structures $(R, S)$ where $R$ is a collection of $r$-subsets and $S$ consists of all independent sets of $R$ of size at most $\ell$.  

\begin{construction} Forbidden hypergraph secret sharing:
\label{cons:hypergraph} Syntactically replace ``graph'' by ``$r$-uniform hypergraph'' and $(u, v)$ by $(u_1, \dots, u_r)$ in Construction~\ref{cons:abrametal}.
\end{construction}

This scheme reconstructs all $\{u_1, \dots, u_r\} \in R$ by~\eqref{eq:reconstruct}.  

\begin{proposition}
Assume $(H, \P_{H,L})$ and $(H, \Q_{H,L})$ are $(s, \epsilon)$-indistinguishable for all $L \subseteq V(H)$ with $|L| = \ell$. Then for every independent set $I \subseteq R$ of size at most $\ell$, shares of $0$ and $1$ are $(s, 2\epsilon)$-indistinguishable by parties in $I$.
\end{proposition}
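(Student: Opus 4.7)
The plan is a standard hybrid argument that chains two applications of the leakage hypothesis through a common "null" distribution. Fix an independent set $I \subseteq [k]$ of $R$ with $|I| \le \ell$, padded to exactly $\ell$ parties without loss of generality (enlarging a coalition only strengthens the adversary). To avoid collision with the size parameter $s$, denote the secret by $b \in \{0,1\}$, and let $V_b$ be the view $(H_b, G, \phi(v_1), \ldots, \phi(v_\ell))$ of the parties in $I = \{v_1, \ldots, v_\ell\}$ produced by Construction~\ref{cons:hypergraph} when the secret is $b$.

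First I would reduce to a fixed leakage set by symmetry. Because $\phi\colon [k]\to[n]$ is sampled uniformly and independently of the secret, conditioning on $\phi(v_j) = j$ for $j = 1, \ldots, \ell$ and relabeling the vertices of $[n]$ does not change the distinguishing advantage between $V_0$ and $V_1$. After identifying $I$ with $L := [\ell]$, the conditional distribution of $(H, G)$ is exactly $\P_{H, L}$, and the view becomes $V_b = \tau_b(H, G)$ where $\tau_b$ is the randomized map $(H,G) \mapsto (H_b, G)$ with $H_b(\vec u) = H(\vec u) \oplus b$ for $\vec u \in R$ and $H_b(\vec u)$ a fresh uniform bit for $\vec u \notin R$. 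Introduce the hybrid $\tilde V_b := \tau_b(H, G)$ with $(H, G) \sim \Q_{H, L}$. Since $\tau_b$ is computed by a trivially small circuit that XORs $b$ into the $R$-entries and pads the rest with fresh randomness, the hypothesis gives $V_b \approx_{(s,\epsilon)} \tilde V_b$ for each $b$.

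The heart of the argument is the identity $\tilde V_0 = \tilde V_1$ in distribution. Under $\Q_{H, L}$, the graph $G$ copies only $H$ restricted to $\binom{L}{r}$, and since $L = I$ is an independent set of $R$ no $r$-subset of $L$ lies in $R$. Hence for every $\vec u \in R$ the entry $H(\vec u)$ is a uniform bit that is independent of $G$ and of all other $H$-bits involved, so $H(\vec u) \oplus b$ has the same joint distribution with $G$ regardless of $b$; the remaining entries of $H_b$ on $\vec u \notin R$ are fresh random bits also independent of $b$. The triangle inequality then gives
\[ V_0 \approx_{(s,\epsilon)} \tilde V_0 \;=\; \tilde V_1 \approx_{(s,\epsilon)} V_1, \]
proving $(s, 2\epsilon)$-indistinguishability. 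No step is delicate; the only points to verify carefully are that the symmetry reduction really lets us identify the shares with $[\ell]$ and invoke $\P_{H,L}$ verbatim, and that the null-world identity $\tilde V_0 = \tilde V_1$ uses precisely the independence of $I$ in $R$ that defines the secrecy set $S$.
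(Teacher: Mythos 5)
Your proof is correct, and it follows the paper's strategy in spirit (two applications of the leakage hypothesis chained through an $s$-independent middle distribution), but it differs in one substantive detail that is worth flagging because it makes your version cleaner. The paper's reduction $D'$ is the \emph{deterministic} map $(H', G, \vec u) \mapsto D(H' \oplus sR, G, \vec u)$, and its middle hybrid is the pair $(H, G)$ with $G$ drawn from $\Q_{H,I}$. However, Construction~\ref{cons:abrametal} defines $H_s$ to carry \emph{fresh uniform bits} on all non-$R$ hyperedges, so the real view $(H_s, G, \phi|_I)$ under the planted distribution has the $\binom{I}{r}$-entries of $H_s$ independent of $G$, whereas $(H\oplus sR, G)$ under $\P_{H,I}$ has those entries equal to $G|_I$ (because $\phi|_I=\mathrm{id}$ in the planted model). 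When $\ell \ge r$ this is a detectable discrepancy, so the claim that $D'(\P_{H,I})$ is identically distributed to $D(H_s,G,\phi|_I)$ is imprecise as stated. Your version fixes this: you make the reduction $\tau_b$ \emph{randomized}, explicitly re-sampling the non-$R$ entries, and your middle hybrid $\tilde V_b = \tau_b(\Q_{H,L})$ then provably equals $\tilde V_{1-b}$ because $H_b$ becomes fully independent of $G$ under $\Q_{H,L}$ (the $R$-entries and $\binom{L}{r}$-entries of $H$ are disjoint when $I$ is independent, and the remaining entries of $H_b$ are fresh). The one thing you share with the paper and should keep in mind if you want to be fully precise is the size slippage: composing a size-$s$ distinguisher with $\tau_b$ yields a distinguisher of size slightly larger than $s$, so the formally tight conclusion is $(s - O(\binom{k}{r}), 2\epsilon)$-indistinguishability rather than $(s,2\epsilon)$; the paper glosses over this in the same way.
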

\begin{proof}
Assume parties in $I$ can $2\epsilon$-distinguish shares of 0 and 1 using distinguisher $D$.  By the triangle inequality, $D$ $\epsilon$-distinguishes $(H_s, G, \phi(i): i \in I)$ from $(H, G,\phi(i): i \in I)$ where 
\[ G(u_1, \dots, u_r) = \begin{cases}
H(u_1, \dots, u_r), &\text{if $u_1, \dots, u_r \in I$} \\
\text{a random bit}, &\text{otherwise}.
\end{cases}
\]
for at least one value of $s$.  Let $D'$ be the circuit that, on input $(H', G, u_i \colon i \in I)$, outputs $D(H' \oplus sR,  G, u_i \colon i \in I)$. As $R$ does not contain any hyperedges within $I$, by~\eqref{eq:defhs},
$D'(\mathcal{P}_{H, I})$ is identically distributed to $D(H_s, G, \phi(i)\colon i \in I)$.  As $H$ is random, $D'(\mathcal{Q}_{H, I})$ is identically distributed to $D(H, G, \phi(i)\colon i \in I)$.  Therefore $D'$ and $D$ have the same advantage.   \hfill\qed
\end{proof}

The class of access structures can be expanded to allow the reconstruction set $R$ to consist of arbitrary sets, as long as the size of all minimal sets is at most $r$.  This is accomplished by a reduction to size exactly $r$. Let $R' \subseteq [n + r - 1]$ be the $r$-uniform hypergraph
\[ R' = \bigl\{A \cup \{n + 1, \dots, n + r - \abs{A}\}\colon A \in R\bigr\}. \]

\begin{construction}
\label{cons:atmostr}
Apply Construction~\ref{cons:hypergraph} to $R'$ with the shares of parties $n + 1, \dots, n + r - 1$ made public.
\end{construction}

If all sets in $R'$ can resconstruct in Construction~\ref{cons:hypergraph} then all sets in $R$ can reconstruct in Construction~\ref{cons:atmostr}.  As for secrecy, if Construction~\ref{cons:hypergraph} is secure against all independent sets in $R$ of size at most $\ell$, then Construction~\ref{cons:atmostr} is secure against such sets of size at most $\ell - r + 1$.

\medskip\noindent
Could Construction~\ref{cons:hypergraph} give a \emph{provable} separation between the minimum share size of information-theoretic and computational secret sharing? We argue that this is unlikely barring progress in information-theoretic secret sharing lower bounds.  The share size in Construction~\ref{cons:hypergraph} is $(1 + \Omega(1))(\log n)$.  However, the share size lower bounds of~\cite{kilian-nisan, bogdanov-guo-komargodski} do not exceed $\log n$ for any known $n$-party access structure.

In contrast, Csirmaz~\cite{csirmaz} proved that there exists an $n$-party access structure with share size $\Omega(n/\log n)$.  Using Csirmaz's method,  Beimel~\cite{beimel} constructed \emph{total} $r$-hypergraph access structures that require share size $\Omega(n^{2 - 1/(r-1)}/r)$ for every $r \geq 3$.  

We argue that Csirmaz's method cannot prove a lower bound exceeding $\ell$ for any (partial) access structures in which secrecy is required to hold only for sets of size up to $\ell$.  Csirmaz showed that a scheme with share size $s$ implies the existence of a monotone submodular function $f$ (the joint entropy of the shares in $A$) from subsets of $\{1, \dots, n\}$ to real numbers that satisfies the additional constraints
\begin{align}
f(A) + f(B) &\geq f(A \cup B) + f(A \cap B) + 1 &&\text{if $A, B \in S$ and $A \cup B \in R$} \label{eq:extramodular} \\
f(A) &\leq s &&\text{for all $A$ of size $1$}. \label{eq:minimal}
\end{align}

\begin{proposition}
\label{prop:infupper}
Assuming all sets in $S$ have size at most $\ell$, there exists a monotone submodular function satisfying~\eqref{eq:extramodular} and~\eqref{eq:minimal} with $s = \ell$.
\end{proposition}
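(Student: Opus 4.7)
My plan is to exhibit an explicit polymatroid meeting all three requirements. I will take
\[
f(A) = \sum_{j=1}^{\ell} \min(|A|, j), \qquad A \subseteq [n],
\]
which is a sum of uniform matroid rank functions and therefore monotone and submodular by inspection. At singletons $f(\{i\}) = \ell$, settling~\eqref{eq:minimal}. It is convenient to record the closed form $f(A) = \ell|A| - \binom{|A|}{2}$ for $|A| \leq \ell$ and $f(A) = \binom{\ell+1}{2}$ for $|A| \geq \ell$, a concave staircase that agrees at $|A| = \ell$.

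The real work is verifying~\eqref{eq:extramodular}. Fix $A, B \in S$ with $A \cup B \in R$. Since $R \cap S = \emptyset$, we have $A \cup B \notin S$, so $A \cup B \neq A$ and $A \cup B \neq B$; hence $A$ and $B$ are incomparable. Writing $a = |A|$, $b = |B|$, $u = |A \cup B|$, $i = |A \cap B|$, this means $i < \min(a, b)$ and $u > \max(a, b)$, while $a, b \leq \ell$ by hypothesis and $a + b = u + i$.

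The algebraic heart of the argument is the identity, valid for $\phi(t) = \ell t - \binom{t}{2}$ whenever $a + b = u + i$,
\[
\phi(a) + \phi(b) - \phi(u) - \phi(i) = (a - i)(b - i),
\]
which one checks by expansion using $a^2 + b^2 - u^2 - i^2 = 2(ab - ui)$. If $u \leq \ell$ all four arguments fall in the quadratic regime, so the submodular gap of $f$ equals $(a - i)(b - i) \geq 1$ by incomparability. If $u > \ell$ then $f(A \cup B) = \binom{\ell+1}{2}$ while the other three still obey $\phi$, and a short computation shows the gap equals $(a - i)(b - i) - \binom{u - \ell}{2}$. The constraint $a, b \leq \ell$ forces $\min(a - i, b - i) \geq u - \ell$, whence $(a - i)(b - i) \geq (u - \ell)^2$, and the gap is at least $(u - \ell)^2 - \binom{u - \ell}{2} = \binom{u - \ell + 1}{2} \geq 1$.

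The principal obstacle is the sub-case $|A \cup B| \leq \ell$, which really can occur — for instance, in forbidden graph access structures with $\ell \geq 2$, where an edge has size at most $\ell$. Naive cardinality-based candidates such as $f(A) = \min(|A|, \ell)$ or $f(A) = |A|$ have submodular gap zero in this regime, and no additive constant correction rescues them. The quadratic growth $\ell|A| - \binom{|A|}{2}$ on small sets is engineered precisely so that the gap $(a - i)(b - i)$ is always at least one when $A$ and $B$ are incomparable, which is exactly what the disjointness of $R$ and $S$ delivers.
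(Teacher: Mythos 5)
Your function $f(A) = \sum_{j=1}^{\ell} \min(|A|, j)$ is identical to the paper's choice $f(A) = \sum_{t=1}^{|A|} \max\{\ell - t + 1, 0\}$ (both equal $\ell |A| - \binom{|A|}{2}$ for $|A| \le \ell$ and $\binom{\ell+1}{2}$ otherwise), so this is essentially the same proof. The paper merely asserts that the required properties hold, whereas you supply the verification, including the key observation that $R \cap S = \emptyset$ forces $A$ and $B$ to be incomparable so that the submodular gap $(a-i)(b-i)$ (resp.\ $(a-i)(b-i) - \binom{u-\ell}{2}$ when $u > \ell$) is at least $1$; your computation is correct.
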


As our scheme does not tolerate $\Omega(\log n)$ bits of leakage, the best share size lower bound that can be proved using Csirmaz's relaxation of secret sharing is $\ell = o(\log n)$. The proof of Proposition \ref{prop:infupper} is a natural generalization of~\cite[Theorem~3.5]{csirmaz} to partial access structures.

\begin{proof}[Proposition \ref{prop:infupper}]
The function $f(A) = \sum_{t=1}^{\abs{A}} \max\{\ell - t + 1, 0\}$ is monotone, submodular, satisfies~\eqref{eq:extramodular} for every $R \subseteq \overline{S}$, and~\eqref{eq:minimal} with $s = \ell$. \hfill\qed
\end{proof}

\subsection{Multiparty PSM for random functions}
\label{sec:psm}

Given a function $F\colon [k]^r \to \{\pm 1\}$, the random hypergraph embedding of $F$ is the $r$-hypergraph $\overline{F}$ on $rk$ vertices $(x, i) \colon x \in [k], i \in [r]$ such that \[ \overline{F}((x_1, 1), \dots, (x_r, r)) = F(x_1, \dots, x_r). \]  All other potential hyperedges of $\overline{F}$ are sampled uniformly and independently at random.

We describe the $r$-partite generalization of Abram et al.'s PSM protocol.  Let $\phi\colon [k] \times [r] \to [n]$ be a random injection and let $G$ be the $r$-hypergraph on $n$ vertices given by
\[ G(u_1, \dots, u_r) = \begin{cases}
\overline{F}(\phi^{-1}(u_1), \dots, \phi^{-1}(u_r)), &\text{if $\phi^{-1}(u_1), \dots, \phi^{-1}(u_r)$ exist} \\
\text{a random bit}, &\text{otherwise.}
\end{cases}
\]

\begin{construction}
\label{cons:psm}
$r$-party PSM protocol for $F$:
\begin{description}
\item In the setup phase, $G$ is published and $\phi$ is privately given to the parties.
\item In the evaluation phase, 
\begin{enumerate}
\item[1.] Party $i$ is given input $x_i$.
\item[2.] Party $i$ forwards $u_i = \phi(x_i, i)$ to the evaluator. 
\item[3.] The evaluator outputs $G(u_1, \dots, u_r)$.
\end{enumerate}
\end{description}
\end{construction}

The protocol is clearly functional.  A reasonable notion of security with respect to random functions $F$ should allow the parties' input choices to depend on $F$.  An \emph{input selector} is a randomized function $I$ that, on input $F$, produces inputs $I(F) = (x_1, \dots, x_r)$ for the $r$ parties.

We say a protocol is $(s', s, \epsilon)$ (simulation) secure against a random function if for every input selector $I$ there exists a size-$s'$ simulator $S$ for which the distributions %\cnote{the left-hand distribution has 3 components $F, G, \phi_i$ but the right-hand distribution has 2 components $F, S$?} \abnote{The simulator outputs the whole view of Carol; this includes the public information $G$ as well as the messages $\phi(x_i, i)$.}
\begin{equation}
\label{eq:psm}
(F, G, \phi(x_1, 1), \dots, \phi(x_r, r)) \quad\text{and}\quad (F, S(F, F(x_1, \dots, x_r)))
\end{equation}
are $(s, \epsilon)$-indistinguishable, where $(x_1, \dots, x_r)$ is the output of $I(F)$.
% For our complexity analysis we will assume sample access to $L(F)$ (at unit cost per sample).

\begin{proposition}
Assume $(H, G, L(H))$ with $G \sim \P_{H,L(H)}$ versus $G \sim \Q_{H,L(H)}$ are $(s, \epsilon)$-indistinguishable with parameters $\abs{V(H)} = kr$, $\abs{V(G)} = n$, and $\ell = r$.  Then Construction~\ref{cons:psm} is $(O(\binom{n}{r}), s - O(\binom{n}{r}), \epsilon)$-secure.
\end{proposition}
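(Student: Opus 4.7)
The plan is to construct an explicit simulator $S$ and reduce PSM security to the assumed indistinguishability of $\P_{H, L(H)}$ and $\Q_{H, L(H)}$ via a direct distributional identification. On input $(F, b)$, the simulator samples a uniformly random ordered tuple $(u_1, \dots, u_r)$ of distinct vertices in $[n]$ together with a uniformly random $r$-hypergraph $G$ on $[n]$ conditioned on $G(u_1, \dots, u_r) = b$, and outputs $(G, u_1, \dots, u_r)$. Writing down $G$ takes $O(\binom{n}{r})$ bits, which gives the claimed bound $s' = O(\binom{n}{r})$ on simulator size.

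The key structural observation is that $\overline{F}$ is distributed as a uniformly random $r$-hypergraph on $[k] \times [r]$: the transversal hyperedges encode the uniformly random $F$, while the remaining hyperedges are uniform by construction. Identifying $[k] \times [r]$ with $[kr]$, we can view $\overline{F}$ as the random $H$ from the planted subgraph model with $|V(H)| = kr$, and we define $L(H) \subseteq V(H)$ as the image of $I(F_H)$ under $x_i \mapsto (x_i, i)$, where $F_H$ is read off from the transversal hyperedges of $H$. Under this identification, the real PSM view $(F, G, \phi(x_1, 1), \dots, \phi(x_r, r))$ has the same distribution as $(H, G, L(H))$ with $G \sim \P_{H, L(H)}$ (the convention $\phi|_L = \mathrm{id}$ in the definition of $\P_{H, L}$ is without loss of generality, since a uniform injection yields a uniform leakage tuple), while the ideal view $(F, S(F, F(x_1, \dots, x_r)))$ coincides with $(H, G, L(H))$ under $\Q_{H, L(H)}$ after randomizing the leakage positions by a uniform permutation of $[n]$.

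For the reduction, suppose a distinguisher $D$ of size $s - O(\binom{n}{r})$ achieves advantage greater than $\epsilon$ between the real and ideal views. I would define $D'(H, G, L)$ to read $F_H$ off the transversal hyperedges of $H$, apply a uniform random permutation $\sigma$ of $[n]$ to both $G$ and $L$ in order to match the uniform-leakage marginal, and output $D(F_H, \sigma(G), \sigma(L))$. The only nontrivial overhead is the $O(\binom{n}{r})$ cost of permuting $G$, so $D'$ has size at most $s$ and, by construction, exactly the same distinguishing advantage as $D$, contradicting the indistinguishability hypothesis.

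The point requiring the most care will be verifying the correspondence between the simulator's output and the symmetrized null distribution. One must check that after marginalizing over the uniform $(u_1, \dots, u_r)$, the simulator produces a graph $G$ in which all hyperedges outside this designated tuple are independent uniform bits while the designated hyperedge equals $F(x_1, \dots, x_r) = H|_{L(H)}$; this is precisely $\Q_{H, L(H)}$ after a uniform relabeling of the leakage positions. The dependence of the leakage on $F$ via the input selector $I$ is already accommodated by the hypothesis, which quantifies over arbitrary functions $L(H)$ of $H$.
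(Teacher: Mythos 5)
Your proposal is correct and follows essentially the same route as the paper: the same simulator (uniform $G$ conditioned on the value of a random tuple of vertices), the same identification of $\overline{F}$ with a uniform $H$ on $kr$ vertices, the same leakage function defined by running the input selector on the transversal part of $H$, and the same symmetrization by a random permutation of $[n]$ to reconcile the fixed-leakage convention of $\P_{H,L}/\Q_{H,L}$ with the uniform leakage positions appearing in both the real PSM view and the simulator's output.
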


We label the vertices of $H$ by pairs $(x, r) \in [k] \times [r]$.

\begin{proof}
On input $(F, y)$, the simulator $S$ 
\begin{enumerate}
\item chooses random $u_1, \dots, u_r \in [n]$
\item sets $G(u_1, \dots, u_r) = y$ 
\item samples all other possible  hyperedges of $G$ independently at random
\item outputs $(G, u_1, \dots, u_r)$.
\end{enumerate}

We describe a reduction $R$ that, given a distinguisher $D$ for \eqref{eq:psm}, tells apart $(H, G, L(H))$  with $G \sim \P_{H,L(H)}$ versus $G \sim \Q_{H,L(H)}$ for some leakage function $L$.  On input $(H, G, z_1, \dots, z_r)$, 
\begin{enumerate}
\item set $F$ to be the function $F(x_1, \dots, x_r) = H((x_1, 1), \dots, (x_r, r))$
\item output $(F, \pi(G), \pi(z_1), \dots, \pi(z_r))$ for a random permutation $\pi$ on $[n]$ (which acts on $G$ as a hypergraph isomorphism).
\end{enumerate}

Let $L$ be the leakage function that, on input $H$, runs $I(F)$ to obtain $(x_1, \dots, x_r)$, and outputs $((x_1, 1), \dots, (x_r, r))$. 

This reduction preserves distinguishing advantage as it maps the distributions~\eqref{eq:psm} into the distributions
$(H, \P_{H,L(H)}, L(H))$ and $(H, \Q_{H,L(H)}, L(H))$, respectively.   It can be implemented in size $O(\binom{n}{r})$, giving the desired parameters. \hfill\qed
\end{proof}

% \abnote{Not 100\% confident in the claim the distributions are preserved, in particular about the simulated one. Should verify this later.}

%\input{old-LDLR-calculation}
\bibliographystyle{alpha}
\bibliography{bibliography}

\end{document}